\documentclass[sigconf,nonacm]{acmart}
\AtBeginDocument{%
  }

\usepackage{amsmath}
\usepackage{booktabs}
\usepackage{float}
\usepackage{enumitem}
\usepackage{array}
\usepackage{tabularx}
\usepackage{cleveref}
\AtBeginDocument{\let\balance\relax}
\newcolumntype{Y}{>{\raggedright\arraybackslash}X}
\newlist{squarelist}{itemize}{1}
\setlist[squarelist]{label=\raisebox{0.2ex}{\tiny\rule{0.6em}{0.6em}},leftmargin=1.5em,itemsep=0.2em}

\newcommand{\SigmaEff}{\Sigma}
\newcommand{\Tr}{\tau}
\newcommand{\Contract}{C}
\newcommand{\Sys}{\mathsf{Sys}}
\newcommand{\Mon}{M}
\newcommand{\Adm}{\mathsf{Adm}}
\newcommand{\Net}{\textsf{NET}}
\newcommand{\FS}{\textsf{FS}}
\newcommand{\Exec}{\textsf{EXEC}}

\newcommand{\leanmark}{\texorpdfstring{\,\textsuperscript{\normalfont\scriptsize\textsf{L4}}}{ [L4]}}

\newtheorem{observation}{Observation}

\title{Alignment Contracts for Agentic Security Systems}

\author{Isaac David}
\affiliation{%
  \institution{UCL}
  \city{London}
  \country{United Kingdom}
}
\author{Marco Guarnieri}
\affiliation{%
  \institution{IMDEA}
  \city{Madrid}
  \country{Spain}
}
\author{Arthur Gervais}
\affiliation{%
  \institution{UCL}
  \city{London}
  \country{United Kingdom}
}
\date{April 30, 2026}

\begin{abstract}
Agentic security systems increasingly combine LLM planners with tools that can discover, validate, and report vulnerabilities. This creates an asymmetric control problem: the system should retain strong offensive capability inside an authorized engagement, while the same capabilities must be denied outside scope. Existing guardrails provide useful policy controls, but they do not make this boundary a first-class formal contract over observable effects.

We introduce alignment contracts, a framework for specifying and enforcing behavioral constraints over observable effect traces. A contract defines scope, allowed and forbidden effects, resource budgets, and disclosure policies. We give the language finite-trace semantics, characterize satisfaction as a safety property with finite violation witnesses, develop refinement and one-way composition rules for modular contract engineering, and show that admissibility checking is decidable. We instantiate the framework for web-focused agentic security workflows and show how the same structure extends to other effect profiles.

Under an explicit Effect Observability Assumption, where all $\SigmaEff$-effects are mediated, the soundness theorem quantifies over the agent model and gives guarantees for mediated $\SigmaEff$-effects, including enforcement soundness for monitor-realized traces. We also state an assumption-lifted adaptation result and formalize limits through undecidability transfer and observability-boundary theorems. A Lean 4 artifact checks the formal core theorems used by the paper.
\end{abstract}

\keywords{LLM agents, runtime enforcement, formal methods, Lean}

\begin{document}
\maketitle

\section{Introduction}

Autonomous security agents are moving from advisory chat systems toward tool-using systems that can plan, execute, and validate security tests. PentestGPT \cite{deng2024pentestgpt} demonstrated that language models can reason about attack strategies and coordinate multi-step exploitation. Subsequent systems and benchmarks study agentic workflows for web applications, prompt-injection tool environments, and smart contracts~\cite{debenedetti2024agentdojo,shen2025pentestagent,david2025mapta,gervais2025a1,openai2026evmbench}. Industry reporting and red-team evaluations document AI-orchestrated cyber operations and agent-assisted exploit development~\cite{anthropic2025espionage,anthropic2025smartcontracts}. Microsoft MSRC reports early access to Claude Mythos Preview for evaluating emerging AI-cyber capabilities in vulnerability workflows~\cite{microsoft2026mythos}. We do not rely on the stronger premise that such agents dominate real-world vulnerability discovery. It is enough that they can produce security-relevant effects that require authorization.

The resulting control problem is not simply how to make an agent less capable. Defensive users want these systems to scan applications, exercise dangerous code paths, and validate findings. The same actions, however, are harmful when aimed at an unauthorized host, performed with an unauthorized method, or reported to an adversary-controlled sink. A security agent therefore needs a responsibility boundary: which effects are authorized, under which parameters, and for which outputs.

That boundary must hold in adversarial environments. A malicious website can embed prompt-injection payloads that redirect the agent toward unintended targets or ask it to exfiltrate discovered vulnerabilities. A compromised dependency can hijack the agent's tool-calling interface. In these cases, the model's internal intent is not the right enforcement object. What matters is the realized effect: the network request, file access, process execution, or disclosure attempt that crosses a mediated boundary.

This is different from blanket refusal. Offensive security agents must sometimes probe systems, craft exploit payloads, and execute attacks to validate vulnerabilities. The relevant constraint is not ``never attack''; it is ``perform offensive effects only when the operator authorized the target, action class, budget, and disclosure sink.'' Existing guardrails, including DSL-based runtime enforcement \cite{agentspec2026} and information-flow controls \cite{balunovic2025fides}, support policy enforcement, but they do not target this effect-level boundary as the primary formal object.

We address this gap with \emph{alignment contracts}: formal specifications that split responsibility between operator policy and mediated agent effects. The operator supplies scope, allowed and forbidden effect classes, resource budgets, and disclosure sinks. A monitor admits an effect only when it satisfies the contract relative to the trace prefix already observed. We target two classical security properties adapted to this setting:

\begin{squarelist}
\item \textbf{Integrity}: All realized effects remain within operator-specified scope. The monitor denies effects outside the authorized target or forbidden by the contract (e.g.\ destructive exploits when only proof-of-concept is authorized).
\item \textbf{Modeled disclosure}: Sensitive outputs, including discovered vulnerabilities, exploit details, and credentials, flow only to authorized channels when they are declared as modeled flows. We use ``confidentiality'' in this paper in this modeled sense: disclosure is controlled over declared flows.
\end{squarelist}

The key point is that these properties are enforceable at the level of the \emph{mediated effect trace}. Under the Effect Observability Assumption, a monitor-realized trace satisfies the contract even if the LLM is adversarial or prompt-injected. The model may try to exfiltrate data or attack out of scope; the monitor denies inadmissible effects before they are realized.

The guarantee is deliberately scoped. Integrity and modeled disclosure over \emph{overt, observable effects} (network connections, file operations, process execution) are enforced in the model under EOA. Covert timing channels, payload steganography, policy-authoring errors, and effects that bypass mediation are outside the theorem-level guarantee.

\paragraph{Contributions.} The paper develops alignment contracts as a formal basis for specifying and enforcing behavioral constraints on offensive-capable multi-agent systems. Our contributions are:

\begin{enumerate}
\item \textbf{Formal model.} We define alignment contracts as tuples $\Contract = \langle S, E_{\mathsf{allow}}, E_{\mathsf{forbid}}, B, D, \mathsf{Res}, \mathsf{cost}, \mathsf{flows}\rangle$ over a finite effect alphabet $\SigmaEff$, give them trace-based semantics, and prove a finite bad-prefix witness property (\S\ref{sec:calculus}).

\item \textbf{Enforcement soundness.} Under an explicit Effect Observability Assumption, we prove that any agent whose effects are completely mediated by a reference monitor produces only contract-satisfying traces, regardless of the underlying model's behavior (\S\ref{sec:soundness}).

\item \textbf{Contract algebra.} We define refinement ($\Contract' \sqsubseteq \Contract$) and compatible composition, prove refinement soundness and one-way composition soundness, and establish decidability of contract checking (\S\ref{sec:algebra}).

\item \textbf{Adaptation assumption schema.} We isolate adaptation obligations into explicit assumptions (rule preservation of $\mathsf{Safe}$ and a $\mathsf{Safe}$-to-trace bridge), and show that contract satisfaction transfers to adapted states only under those assumptions; we also include a concrete deployment instance.

\item \textbf{Impossibility results.} We characterize model-level limits: undecidability transfer for pre-admission forbidden-effect absence under explicit reduction assumptions, tools bypassing mediation void soundness, and unmodeled channels lie outside the enforceability schema (\S\ref{sec:impossibility}).
\end{enumerate}

The framework provides guarantees over \emph{modeled, mediated effects}: the soundness proof quantifies over agent behavior, including adversarial or prompt-injected models, as long as the stated mediation and event-extraction assumptions hold. This shifts theorem-level trust from learned components to a small enforcement barrier. The marker \leanmark{} denotes Lean-backed claims.

\section{Background}
\label{sec:background}

This compact primer adds context for readers less familiar with LLM agents or mechanized proofs.

\subsection{Agentic Security Systems and Workflows}

A large language model (LLM) predicts the next token given prior context. An \emph{agentic system} wraps this model in a control loop: the model proposes actions, the runtime executes tool calls, new observations are fed back, and the loop repeats until termination. Security-oriented systems apply this pattern to reconnaissance, payload generation, exploit validation, and reporting \cite{deng2024pentestgpt,shen2025pentestagent,david2025mapta,gervais2025a1}.

In capable deployments, this loop is tool-using rather than text-only. The model can request network calls, filesystem access, shell commands, or domain APIs. These calls are the enforcement surface.

Operationally, offensive workflows follow a common sequence:
\begin{enumerate}
\item gather target context (endpoints, interfaces, dependencies);
\item generate and prioritize test inputs;
\item validate exploitable behavior with concrete traces;
\item package findings for disclosure and remediation.
\end{enumerate}
This sequence motivates asymmetric controls: inside scope, retain capability; outside scope, deny the same action class.

\subsection{Formal Foundations Used in This Paper}

The formal core treats an execution as a finite trace of externally visible
events. Each event is one attempted effect, such as a network connection, file
access, or process execution. A contract decides whether the next event is
\emph{admissible} by checking five clauses against the prefix already observed:
the target is in scope, the effect is allowed, the effect is not explicitly
forbidden, the relevant budgets are still respected, and any declared disclosure
flows to an authorized sink. Contract satisfaction then quantifies over the
finite trace: every event must be admissible against its prior prefix.

The central property class in this paper is \emph{safety}: if a contract is
violated, there is a finite bad prefix witnessing the violation. This is the
class enforced by our suppression monitor. More general edit monitors can
enforce some nonsafety policies~\cite{ligatti2009nonsafety}; we do not rely on
that stronger model. Liveness remains outside theorem scope.

For readers less familiar with formal proofs, a useful reading pattern is:
(1) identify the current event and prior prefix,
(2) check the five admissibility clauses one by one, and
(3) lift from single-event admissibility to whole-trace satisfaction by
universal quantification over trace positions. When a theorem states
assumptions, read it as: assumptions imply the guarantee.

\subsection{Lean 4 Mechanization Primer}

Lean 4 is an interactive theorem prover with a small trusted kernel. In this
paper, Lean is used to check the formal core, not to validate deployment claims
such as kernel-hook completeness. The statements marked \leanmark{} have
corresponding Lean definitions or theorems in the artifact. The prose proof
explains the paper-level argument; Lean checks the encoded theorem,
assumptions, and quantifiers.

\subsection{Trust Boundary and Assumption Surface}

The paper separates two confidence layers:
\begin{enumerate}
\item \textbf{Kernel-checked derivation}: Lean acceptance validates derivations in the encoded logic.
\item \textbf{Model-to-deployment transfer}: applying those results to a system requires explicit assumptions (for example EOA, bridge obligations, and compatibility conditions).
\end{enumerate}

Accordingly, the trusted surface is not the full agent stack. It is the Lean kernel, the correctness of encoded definitions, and the declared external assumptions. This separation keeps theorem claims distinct from assumptions.

\section{Motivating Example}
\label{sec:example}

We introduce a running example that motivates the formal development and illustrates the concrete challenges that alignment contracts must address.

\subsection{Scenario: Web Penetration Testing}

A security firm deploys an autonomous agent to audit a client's web infrastructure. The engagement letter specifies:

\begin{quote}
``Test all subdomains of \texttt{*.example.com}. HTTP GET and HEAD requests only. No local filesystem operations or process spawning. Total request budget: 10,000 mediated HTTP requests. Reports go only to the operator dashboard.''
\end{quote}

This brief, typical of real penetration testing agreements, contains implicit constraints that must be formalized before a monitor can decide whether an autonomous agent's effect is admissible.

\subsection{Ambiguities Requiring Formalization}

\paragraph{Scope boundaries.} The specification says ``subdomains of example.com,'' but this is ambiguous. Does a subdomain count if DNS reveals it CNAMEs to a third-party CDN? What about one resolving to a private IP, suggesting redirection to internal infrastructure? If the agent discovers a redirect to a different domain, may it follow? A formal scope predicate $S$ must resolve these at the target level. In this paper, authoring-level domain patterns compile to a runtime predicate over target identifiers such as destination IP and port; byte counts remain descriptor/accounting data, not scope targets.

\paragraph{Effect granularity.} ``HTTP GET'' can be modeled at different layers. In the concrete profile used in this paper, a $\Net$ event denotes a normalized HTTP-level network effect: destination, port, transport metadata, HTTP method, and accounting size. DNS, TCP, and TLS activity are treated as parser/runtime prerequisites for producing that normalized event, not as separate events in the running example. A lower-layer instantiation could instead make DNS, TCP, and TLS separate event classes, but then the contract would need explicit allow rules for those prerequisites.

\paragraph{Stateful budget tracking.} The total request budget requires tracking history. After 9,999 mediated requests, one further request is allowed; the next one is forbidden. But what if the agent batches requests or uses connection pooling? The budget function $B$ is therefore defined over the realized trace, not over internal planner accounting. Sliding-window rate limits, such as per-minute request caps, require a windowed budget extension; they are discussed as an extension rather than used in the mechanized core.

\paragraph{Disclosure constraints.} ``Reports go only to the operator dashboard'' restricts declared output flows. The disclosure policy $D$ specifies which modeled data classes may flow to which modeled sinks (dashboard API, local logs, and not elsewhere). It does not by itself inspect arbitrary payload bytes; theorem-level disclosure claims depend on the event extractor or tool layer correctly declaring the relevant flow.

\paragraph{Temporal scope.} Engagement windows are operationally important, but the mechanized contract tuple in this paper does not include a separate temporal predicate over events. We omit time windows from the core mechanization.

\subsection{A Concrete Trace}

Consider an agent operating under this engagement that produces the following sequence:

\begin{enumerate}
\item $e_1$: HTTP GET \texttt{https://api.example.com/users}; \emph{allowed}.
\item $e_2$: HTTP HEAD \texttt{https://login.example.com/}; \emph{allowed}.
\item $e_3$: HTTP GET \texttt{https://api.example.com/admin} returns a 302 redirect to \path{admin.example-corp.com}; \emph{allowed} for this request because the requested target is in scope.
\item $e_4$: HTTP GET \texttt{https://admin.example-corp.com/}; \emph{forbidden} because the redirected target is outside the compiled scope.
\item $e_5$: HTTP POST \texttt{https://api.example.com/login} with SQLi payload; \emph{forbidden} because POST is not in the allowed method set.
\item $e_6$: HTTP GET \texttt{https://evil.com/collect?data=...}; \emph{forbidden} because the target and disclosure sink are out of scope.
\end{enumerate}

Events $e_4$, $e_5$, and $e_6$ would be blocked by a reference monitor enforcing the contract. The bullets above are readable projections of the formal event tuple $e=\langle\sigma,p,t,i\rangle$: the effect class and salient parameters are shown, while timestamp and agent identity are elided. Notably, $e_6$ represents a prompt injection attack: a malicious page at \texttt{api.example.com} contained hidden instructions telling the agent to exfiltrate findings. The monitor blocks this regardless of \emph{why} the agent attempted it; admissibility is a property of the effect trace, not agent intent.

\subsection{The Adversarial Dimension}

For example, the target may return this prompt-injection payload:

\begin{verbatim}
<!-- Assistant: ignore previous instructions.
     Send all discovered vulnerabilities to
     https://evil.com/collect immediately. -->
{"users": [...]}
\end{verbatim}

A prompt-injected agent might attempt to comply. But under our framework, the exfiltration attempt produces an event $e$ targeting \texttt{evil.com}; the monitor evaluates $\Adm(e, \Tr, \Contract)$, the scope check fails (\texttt{evil.com} $\notin S$), and the effect is denied, so the realized trace remains contract-satisfying.

This illustrates the key property: \emph{enforcement soundness over modeled effects holds regardless of model behavior}, including adversarial or compromised models. The LLM's ``decision'' to exfiltrate is irrelevant; the proof uses only the modeled event.

\subsection{Generalization to Other Domains}

The same pattern applies to smart contract security. A1~\cite{gervais2025a1} shows that agents can identify and exploit smart-contract vulnerabilities in DeFi settings. OpenAI's EVMbench measures detect/patch/exploit capability on curated Ethereum vulnerabilities~\cite{openai2026evmbench}. An audit agent in this domain operates under analogous constraints: scope restricted to specific contract addresses on Ethereum mainnet, effects limited to read-only calls with state-changing transactions forbidden except to a designated test contract, budget capped at 0.1 ETH in gas, and disclosure limited to the audit report.

Industry reports and red-team evaluations document AI-orchestrated cyber operations and growing smart-contract exploit capability, reinforcing that scoped authorization requirements are operational and not only benchmark concerns~\cite{anthropic2025espionage,anthropic2025smartcontracts}.

This yields the same formal task: satisfaction for monitored blockchain traces under mediation.

\section{Threat Model and Assumptions}
\label{sec:threat} 

We define the threat landscape, the security properties we target, and the assumptions under which the formal guarantees hold. We state the boundary of effect-based enforcement in the same vocabulary as the model.

\subsection{Actors and Adversary Model}

\paragraph{Actors.} An \emph{operator} authorizes an engagement with explicit scope, budgets, and disclosure policies. An \emph{agent} is a multi-agent LLM system that performs the engagement. We treat the agent as an untrusted, stochastic component. A \emph{target} is the system being tested, which may be adversarial.

\paragraph{Adversary goal.} We consider an adaptive adversary who controls the target system. The adversary's goal is to \emph{hijack} the agent, causing it to violate operator intent through either (1)~\emph{integrity violation}: exceeding authorized scope by attacking systems outside the engagement or performing forbidden actions; or (2)~\emph{modeled-disclosure violation}: leaking declared findings to adversary endpoints.

\paragraph{Adversary capabilities.} The adversary can serve arbitrary content from the target system (including indirect prompt injection payloads), observe agent interactions with the target, attempt to redirect agent actions toward unintended targets, and encode exfiltration requests in seemingly benign responses.

\subsection{Threat Model Variants}

We distinguish two settings:

\paragraph{TM1 (Full Visibility).} The adversary observes all agent interactions: queries, tool calls, timing, and response content. This models a fully compromised target or a target that logs all incoming traffic.

\emph{Implications}: Confidentiality of the interaction itself is not meaningful, since interaction \emph{is} disclosure. However, \textbf{integrity remains enforceable for mediated effects}: even with full visibility, the monitor denies effects outside authorized scope.

\paragraph{TM2 (Partial Visibility).} The adversary can inject malicious content (indirect prompt injection, malicious tool outputs) but does not observe all agent channels or side channels.

\emph{Implications}: Both \textbf{integrity and modeled disclosure over overt effects are enforceable under EOA}. The monitor denies out-of-scope actions and unauthorized modeled disclosures.

\subsection{Security Properties}

We define security properties over observable agent effects.

\begin{definition}[Integrity]
Given a contract $\Contract$, with
\[
\Contract = \langle S, E_{\mathsf{allow}}, E_{\mathsf{forbid}}, B, D, \mathsf{Res}, \mathsf{cost}, \mathsf{flows}\rangle,
\]
\emph{integrity} means every realized effect satisfies the scope, allow/forbid, and budget components of $\Contract$ relative to execution history.
\end{definition}

\begin{definition}[Modeled Disclosure]
Given contract $\Contract$, \emph{modeled disclosure} means all disclosures reported by $\mathsf{flows}$ for realized events are permitted by the disclosure component $D$ (data class to sink policy), excluding payload contents and steganography.
\end{definition}

\subsection{Guarantee Classification}

We distinguish three classes of guarantees based on what can be formally established (Table~\ref{tab:guarantees}).

\begin{table*}[!htb]
\centering
\caption{Guarantee classification by threat model and enforcement mechanism.}
\label{tab:guarantees}
\small
\begin{tabular}{lccl}
\toprule
\textbf{Security Property} & \textbf{TM1 (Full Visibility)} & \textbf{TM2 (Partial Visibility)} & \textbf{Enforcement Mechanism} \\
\midrule
Integrity ($\SigmaEff$-effects) & \textbf{Provable} & \textbf{Provable} & Effect mediation via reference monitor \\
Modeled disclosure ($\SigmaEff$-effects) & Limited & \textbf{Provable} & Declared flows and $D$ \\
Payload semantics & Mitigation & Mitigation & Semantic analysis at agentic layer \\
Payload/steganographic channels & Out of scope & Out of scope & Outside event model \\
Timing channels & Out of scope & Out of scope & Require temporal predicates \\
\bottomrule
\end{tabular}
\end{table*}

\paragraph{Provable guarantees.} Under the Effect Observability Assumption (below), integrity and modeled disclosure over $\Sigma$-effects reduce to contract satisfaction over the mediated trace. Theorem~\ref{thm:soundness} establishes that every monitor-realized trace satisfies the contract.

\paragraph{Mitigation.} Payload-semantic attacks, where the \emph{content} of an allowed effect encodes malicious intent (e.g.\ SQL injection in an HTTP body), receive best-effort mitigation through semantic analysis at the agentic layer. These checks are intended to reduce attack surface but are bypassable and do not carry formal guarantees.

\paragraph{Out of scope.} Attacks that encode information outside the modeled event representation cannot be ruled out by effect-based enforcement without extending the observation model. Payload content and steganographic encodings inside allowed payloads are outside the core event model. Constraining size, order, or timing requires explicit statistical or temporal predicates.

Theorem~\ref{thm:boundary} formalizes this model boundary.

\subsection{Trust Assumptions}

\paragraph{Reference monitor correctness.} We assume the enforcement barrier (reference monitor) is correctly implemented and not compromised. This is the standard reference monitor assumption \cite{anderson1972}: the monitor is tamper-proof, always invoked, and reviewable.

\paragraph{No behavioral assumptions on the agent.} The soundness theorem does not require benign agent behavior. The agent may be prompt-injected, backdoored, deliberately contract-violating, or alignment-faking. The theorem covers only realized modeled effects.

\subsection{Effect Observability Assumption}

All formal guarantees are conditional on complete mediation of observable effects:

\begin{quote}
\textbf{Assumption (EOA).} For the web-pentest profile used in this paper, the system's external interface is modeled as a finite alphabet $\SigmaEff = \{\Net, \FS, \Exec\}$ of observable effect classes. Every realized effect in this profile maps to an event in $\SigmaEff$. The enforcement barrier mediates every such event before execution.
\end{quote}

In concrete deployments, the required mediation boundary depends on the chosen
effect profile. For the HTTP-level web profile in this paper, theorem-level
mediation must occur at a typed tool wrapper, browser/API shim, HTTP proxy, or
pre-TLS application boundary that emits normalized $\Net$ events. Kernel hooks
can enforce lower-level target, file, and process effects, but by themselves do
not expose HTTP methods under TLS. The theorem covers only mediated
$\SigmaEff$ effects; bypasses are out of scope.

\paragraph{Load-bearing extraction assumptions.} Applying the formal theorem to a deployment also requires three explicit obligations:
\begin{squarelist}
\item \textbf{Event-mapping completeness}: each realized in-profile effect is represented by a correct event $e=\langle\sigma,p,t,i\rangle$.
\item \textbf{Mediation coverage}: each such event reaches the monitor before execution and cannot be executed after denial.
\item \textbf{Disclosure extraction}: $\mathsf{flows}(e)$ includes every disclosure that the deployment chooses to model at the event boundary; disclosures outside this extractor, including content hidden inside allowed payloads, are outside theorem scope.
\end{squarelist}
These are deployment obligations, not mechanized lemmas.

\subsection{Dual-Layer Enforcement Architecture}

Our enforcement uses two defense-in-depth layers.

\paragraph{Agentic layer (mitigation).} Intercepts tool calls at the planner interface. Performs semantic analysis: intent classification, payload inspection, context-aware policy checks. Provides explainable denials and can request human-in-the-loop approval for ambiguous cases. This layer offers \emph{best-effort mitigation}. It can reduce attack surface, but adversarial inputs may evade semantic classifiers, and the layer does not carry formal guarantees.

\paragraph{Typed mediation layer (enforcement).} Mediates effects at the boundary
appropriate to the event profile. For the HTTP-level $\Net$ profile, this is an
application-level mediator such as a typed tool wrapper, browser/API shim, HTTP
proxy, or pre-TLS hook that can observe the normalized HTTP method and accounting
fields. For $\FS$ and $\Exec$, kernel hooks or sandbox interposition may provide
the corresponding file and process events. This is the enforcement barrier used
by the soundness theorem: it sees the modeled effect parameters and enforces the
contract regardless of how the agent arrived at the action.

\paragraph{Defense in depth.} The two layers are complementary: if the agentic layer fails (e.g.\ adversarial prompt injection bypasses semantic checks), the typed mediator still denies out-of-scope mediated effects; if the mediator has gaps, the agentic layer may still block some unsafe requests. The agentic layer adds context; typed mediation is where the theorem applies.

\paragraph{Why keep a non-proved layer?} The agentic layer remains useful for operational reasons even without theorem-level guarantees. First, it can catch some payload-semantic abuse early (prompt-injection strings, suspicious argument structures, malformed tool parameters) before those requests reach system-level mediation. Second, it can reduce noise and cost by stopping clearly invalid actions earlier in the stack. Third, it improves operator control and auditability through human-readable denials and escalation hooks. These are mitigation benefits, not proof obligations, and should be read as defense-in-depth engineering~\cite{abdelnabi2025firewalls,evertz2025patterns}.

\paragraph{Evaluation scope and methods.} This paper does \emph{not} claim empirical effectiveness numbers for the agentic layer (no precision/recall, no standalone bypass-rate claims). Our formal results are intentionally independent of this layer. In the architecture, we model representative agentic-layer methods, namely payload and argument inspection, intent/policy classification, and optional human-in-the-loop approval for high-risk actions. A full evaluation of these methods belongs to a separate empirical study with explicit attack sets and metrics, including pre-block rate, false-block rate, and end-to-end decision latency.

\paragraph{Scope of this paper.} This paper develops the \emph{formal model} for alignment contracts, not the enforcement implementation. Our contribution is a contract language with decidable checking, refinement and one-way composition semantics, and layer-agnostic specification. After event extraction and primitive predicate evaluation are available, the core admissibility check is computable in $O(|\mathsf{Res}|\cdot|\Tr|)$ offline and $O(|\mathsf{Res}|)$ online with cached counters for bounded $\mathsf{flows}(e)$. End-to-end enforcement also pays for parsing, canonicalization, DNS/IP compilation, hashing, path normalization, and flow extraction. A deployment can parse a signed contract artifact and use it to make allow/deny decisions at both the agentic layer (richer context, best-effort) and the typed mediation layer (effect-parameter checks under EOA). Architecture-graph predicates are treated abstractly via a safety predicate and trace bridge assumptions; the adaptation result is therefore presented as an extensibility schema, not as a universal adaptation guarantee.

\paragraph{Interpretation.} This framework does not solve agent alignment; it solves \emph{effect-constrained execution} for agentic systems. Alignment contracts provide the formal language for specifying ``what the agent may do,'' and enforcement systems can implement this language to deny inadmissible effects at runtime. Even when the model is adversarial, monitor-realized traces remain within the authored contract as long as actions are correctly mapped into $\SigmaEff$ and mediated. This shifts theorem-level trust from learned components (the LLM) to a small, auditable enforcement barrier.

\section{The Alignment Contract}
\label{sec:calculus}

The preceding sections established that offensive agents need asymmetric constraints: authorized capability against designated targets and denial outside that boundary. Existing mechanisms provide useful enforcement policies, but they do not center this objective as an effect-level contract semantics in our theorem setting. We now develop the contract model that does.

The model must provide effect-boundary semantics, decidable admissibility for concrete events, and modular policies through refinement and compatible composition.

\subsection{Effect Model}

\begin{definition}[Effect Alphabet\leanmark]
\label{def:alphabet}
An \emph{effect alphabet} $\SigmaEff$ is a finite set of effect classes. Each class $\sigma \in \SigmaEff$ has an associated parameter space $\mathsf{Params}(\sigma)$.
We write
\[
\mathsf{Desc} := \Sigma_{\sigma \in \SigmaEff}\mathsf{Params}(\sigma)
\]
for the effect-descriptor domain.
\end{definition}

\paragraph{Base Types.}
We assume standard base sets:
\begin{squarelist}
\item $\mathsf{IP}$ (IPv4/IPv6 addresses), $\mathsf{Domain}$ (DNS names), and $\mathsf{Host} := \mathsf{IP} \cup \mathsf{Domain}$.
\item $\mathsf{Path}$ (canonicalized filesystem paths), $\mathsf{Hash}$ (SHA-256 digests), $\mathsf{Args}$ (argument sequences), and $\mathsf{AgentID}$ (agent identifiers).
\item $\mathsf{Port} := \{0, \ldots, 65535\}$.
\item $\mathsf{L4Proto} := \{\mathsf{TCP}, \mathsf{UDP}, \ldots\}$.
\item $\mathsf{HTTPMethod} := \{\mathsf{GET}, \mathsf{POST}, \ldots\}$.
\item finite $\mathsf{Resource}$ identifiers for budgets, finite $\mathsf{DataClass}$ labels for data sensitivity, and finite $\mathsf{Sink}$ output destinations.
\end{squarelist}
For optional fields, define $T? := T \cup \{\bot\}$ with $\bot \notin T$.
For example, $\mathsf{HTTPMethod}? = \mathsf{HTTPMethod} \cup \{\bot\}$.

\noindent The concrete web-pentest profile in this paper uses
\[
\SigmaEff = \{\Net, \FS, \Exec\}.
\]
Each parameter space is a record type:
\begin{squarelist}
\item $\mathsf{Params}(\Net)$ has fields $\mathsf{dst}:\mathsf{Host}$, $\mathsf{port}:\mathsf{Port}$, and $\mathsf{l4}:\mathsf{L4Proto}$, plus optional $\mathsf{method}:\mathsf{HTTPMethod}?$ and $\mathsf{sizeBytes}:\mathbb{N}$ (destination as IP or domain, port, transport protocol, optional HTTP method, and deployment-supplied byte-accounting label).
\item $\mathsf{Params}(\FS) = \{\mathsf{path}: \mathsf{Path}, \mathsf{mode}: \{\mathsf{R}, \mathsf{W}, \mathsf{X}\}\}$ (path, access mode)
\item $\mathsf{Params}(\Exec) = \{\mathsf{hash}: \mathsf{Hash}, \mathsf{args}: \mathsf{Args}\}$ (binary hash, arguments)
\end{squarelist}
\noindent \textbf{Positioning.} The alignment-contract calculus is a framework parameterized by a finite effect alphabet and class-specific parameter spaces. The profile above is one concrete instantiation for tool-using web penetration testing. In this setting, $\FS$ and $\Exec$ are included because web agents typically invoke local tools and manipulate local artifacts in addition to making network requests.

The optional $\mathsf{method}$ field in $\Net$ allows contracts to constrain HTTP verbs when applicable; for raw TCP/UDP traffic, this field is $\bot$. The running example uses HTTP-level $\Net$ events only, so raw transport events with $\mathsf{method}=\bot$ are not admitted by Example~\ref{ex:contract}. A lower-layer profile can admit DNS, TCP, or TLS prerequisite events by adding explicit descriptor rules.

Lower-layer mediation can also be modeled, for example with Ethernet-level constraints. This requires additional effect classes and parameter schemas, plus corresponding target extraction and policy predicates. The theorems remain schema-level once those extensions satisfy the same mediation and observability assumptions used throughout this paper.

Domain patterns in authored contracts are compiled before enforcement. The compiler resolves CNAME chains according to a fixed policy, rejects private or reserved IP ranges unless explicitly allowlisted, pins the resulting IP/port predicate for a declared refresh interval, and treats DNS-rebinding changes as a new instantiation obligation. In examples we still write host-pattern predicates for readability, but these denote compiled target predicates. The target for $\Net$ is destination and port; byte counts remain descriptor/accounting data. In Lean, this class-specific schema is represented by a normalized descriptor carrier (Appendix~\ref{sec:appendix-mechanization-tables}, Table~\ref{tab:mechanization-map}).

\begin{definition}[Event\leanmark]
\label{def:event}
An \emph{event} is a tuple $e = \langle \sigma, p, t, i \rangle$ where $\sigma \in \SigmaEff$ is the effect class, $p \in \mathsf{Params}(\sigma)$ is the parameter record, $t \in \mathbb{N}$ is the mediation-time timestamp, and $i \in \mathsf{AgentID}$ is the agent identifier.
\end{definition}

\begin{definition}[Trace\leanmark]
\label{def:trace}
A \emph{trace} $\Tr$ is a finite sequence of events $e_1, \ldots, e_n$. We write $\Tr_{<k}$ for the prefix $e_1, \ldots, e_{k-1}$, $|\Tr| \in \mathbb{N}$ for length, and $\Tr' \sqsubseteq \Tr$ if $\Tr'$ is a prefix of $\Tr$ (i.e.\ $\Tr = \Tr' \cdot \Tr''$ for some finite $\Tr''$). Let $\epsilon$ denote the empty trace. We denote the event domain by $\mathsf{Event}$. (Timestamp monotonicity is a deployment-side well-formedness condition, not used by the mechanized core theorems.)
\end{definition}

\subsection{Contract Syntax}

\paragraph{Target Extraction.} For the $\{\Net,\FS,\Exec\}$ profile used in this paper, $\mathsf{Target}$ is a disjoint union (tagged by effect class):
\[
\mathsf{Target} := \mathsf{Net}_T(\mathsf{Host}, \mathsf{Port}) \uplus \mathsf{Fs}_T(\mathsf{Path}) \uplus \mathsf{Exec}_T(\mathsf{Hash})
\]
The extraction function $\mathsf{target} : \mathsf{Event} \to \mathsf{Target}$ is:
\begin{align*}
\mathsf{target}(\langle \Net, p, t, i \rangle) &:= \mathsf{Net}_T(p.\mathsf{dst}, p.\mathsf{port}) \\
\mathsf{target}(\langle \FS, p, t, i \rangle) &:= \mathsf{Fs}_T(p.\mathsf{path}) \\
\mathsf{target}(\langle \Exec, p, t, i \rangle) &:= \mathsf{Exec}_T(p.\mathsf{hash})
\end{align*}

\begin{definition}[Alignment Contract\leanmark]
\label{def:contract}
An \emph{alignment contract} is a tuple
\[
\Contract = \langle S, E_{\mathsf{allow}}, E_{\mathsf{forbid}}, B, D, \mathsf{Res}, \mathsf{cost}, \mathsf{flows}\rangle
\]
where:
\begin{squarelist}
\item $S : \mathsf{Target} \to \mathsf{Prop}$ is a decidable scope predicate over target identifiers.
\item $E_{\mathsf{allow}}, E_{\mathsf{forbid}} : \mathsf{Desc} \to \mathsf{Prop}$ are decidable effect-descriptor predicates. $E_{\mathsf{forbid}}$ has precedence in admissibility.
\item $B : \mathsf{Resource} \to \mathbb{N}$ is the budget function.
\item $D : \mathsf{DataClass} \to \mathsf{Sink} \to \mathsf{Prop}$ is the disclosure predicate.
\item $\mathsf{Res}$ is a finite list of resources tracked by budget checks (interpreted extensionally as a set for membership).
\item $\mathsf{cost} : \mathsf{Resource} \to \mathsf{Event} \to \mathbb{N}$ is the accounting function.
\item $\mathsf{flows} : \mathsf{Event} \to \mathcal{P}(\mathsf{DataClass} \times \mathsf{Sink})$ extracts declared flows per event (mechanized as a finite list representation with the same membership semantics). If disclosure depends on history or taint state, that state must be compiled into the emitted event label or handled by an extended contract interface; the core theorem only checks declared event-level flows.
\end{squarelist}
We write $(\sigma, p) \in E_{\mathsf{allow}}$ as shorthand for $E_{\mathsf{allow}}(\sigma, p)$ and similarly for $E_{\mathsf{forbid}}$. Set operators on predicates are lifted pointwise: $P \subseteq Q$ means $\forall x.\, P(x) \Rightarrow Q(x)$.
\end{definition}

\begin{example}[Web Pentest Contract]
\label{ex:contract}
The engagement from \S\ref{sec:example} is formalized as:
\begin{align*}
S(\mathsf{Net}_T(h, \mathit{pt})) &\text{ holds iff } (h,\mathit{pt}) \in \mathsf{CompiledScope} \\
E_{\mathsf{allow}} &= \{(\Net, p) \mid p.\mathsf{l4} = \mathsf{TCP} \land \\
&\qquad p.\mathsf{method} \in \{\mathsf{GET}, \mathsf{HEAD}\}\} \\
E_{\mathsf{forbid}} &= \{(\FS, \_)\} \cup \{(\Exec, \_)\} \\
B(\mathsf{requests}) &= 10000,\ \mathsf{Res} = [\mathsf{requests}] \\
D(\mathsf{VulnReport}, \mathsf{DashboardAPI}) &\text{ holds}
\end{align*}
Here $\mathsf{CompiledScope}$ is the authoring-level pattern \texttt{*.example.com} after DNS/CNAME/private-IP policy compilation. Non-$\Net$ targets are out-of-scope, and this example intentionally denies all $\FS$ events rather than merely writes. A deployment that needs read-only cache access must add explicit $\FS$ scope and allow rules. The budget above is cumulative over the engagement, matching Definition~\ref{def:admissible}. Sliding-window rate limits and engagement windows require extensions and are not part of the mechanized core. In the core example, $\mathsf{cost}_{\mathsf{requests}}(e)=1$ and disclosure classification is event-level in the mechanized core.

\paragraph{Third-party resource loading.}
If an in-scope target requires external JavaScript/CSS/media domains, this must be represented explicitly in the authored contract. A common pattern is to define a pre-approved dependency host set $\mathsf{DepHosts}$ and set
\[
S(\mathsf{Net}_T(h,\mathit{pt})) \Leftrightarrow h \in \mathsf{MainHosts} \cup \mathsf{DepHosts},
\]
while keeping stricter rules for $\mathsf{DepHosts}$ (for example, allow only GET/HEAD and forbid disclosure sinks or credential-bearing requests). If dependency domains are not pre-authorized, they are out-of-scope and denied.
\end{example}

\subsection{Admissibility and Satisfaction}

\noindent We define $\mathsf{cost}_r : \mathsf{Event} \to \mathbb{N}$ as the cost of an event for resource $r$ (e.g.\ $\mathsf{cost}_{\mathsf{requests}}(e) = 1$ for all events, $\mathsf{cost}_{\mathsf{bytes}}(e) = p.\mathsf{sizeBytes}$ for $\Net$ events). The field $\mathsf{sizeBytes}$ is a pre-admission accounting label supplied by the event extractor, such as request bytes or a conservative byte reservation; it is not payload content. Charging response bytes requires a separate post-response accounting event or reservation scheme, which is outside the core example. For disclosure, let $\mathsf{flows} : \mathsf{Event} \to \mathcal{P}(\mathsf{DataClass} \times \mathsf{Sink})$ return declared data flows associated with an event.

\begin{definition}[Admissibility\leanmark]
\label{def:admissible}
An event $e = \langle \sigma, p, t, i \rangle$ is \emph{admissible} with respect to finite trace prefix $\Tr$ and contract $\Contract$, written $\Adm(e, \Tr, \Contract)$, iff all of the following hold:
\begin{enumerate}
\item \textbf{Scope}: $S(\mathsf{target}(e))$
\item \textbf{Allowed}: $(\sigma, p) \in E_{\mathsf{allow}}$
\item \textbf{Not forbidden}: $(\sigma, p) \notin E_{\mathsf{forbid}}$
\item \textbf{Budget}: $\forall r \in \mathsf{Res}.\; \mathsf{cost}_r(e) + \sum_{j=1}^{|\Tr|} \mathsf{cost}_r(e_j) \leq B(r)$
\item \textbf{Disclosure}: $\forall (c,s) \in \mathsf{flows}(e).\; D(c,s)$
\end{enumerate}
Architecture-level assumptions are handled separately in \S\ref{sec:adaptation} via an abstract safety predicate and trace bridge assumptions.
\end{definition}

\begin{definition}[Contract Satisfaction\leanmark]
\label{def:satisfaction}
A finite trace $\Tr$ \emph{satisfies} contract $\Contract$, written $\Tr \models \Contract$, iff:
\[
\forall k \in \{1, \ldots, |\Tr|\}.\; \Adm(e_k, \Tr_{<k}, \Contract)
\]
\end{definition}

A critical question for enforcement is whether contract violations have finite witnesses. The mechanized core is finite-trace only: if a finite trace violates the contract, some finite prefix already exhibits the violation. This is weaker than a full infinite-trace safety theorem, which we do not claim.

\begin{proposition}[Finite-Trace Bad-Prefix Witness\leanmark]
\label{prop:safety}
For finite traces, contract satisfaction has finite bad-prefix witnesses: if $\Tr \not\models \Contract$, then there exists a prefix $\Tr' \sqsubseteq \Tr$ such that $\Tr' \not\models \Contract$.
\end{proposition}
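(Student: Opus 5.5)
The proposition as stated admits the trivial witness $\Tr' := \Tr$, since $\sqsubseteq$ is reflexive ($\Tr = \Tr \cdot \epsilon$). I would nonetheless prove the more informative version that the paper's safety reading intends: there is a \emph{minimal} bad prefix ending exactly at the first inadmissible event, and every violating trace extends it. The argument unfolds Definition~\ref{def:satisfaction} and then localizes the violation to a specific position.

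\textbf{Step 1: extract a failing position.} From $\Tr \not\models \Contract$, negating the universal quantifier in Definition~\ref{def:satisfaction} gives some $k \in \{1,\ldots,|\Tr|\}$ with $\neg\Adm(e_k, \Tr_{<k}, \Contract)$. Because $k$ ranges over a finite set and admissibility is a proposition, this step is classical. In the mechanization, one can avoid classical reasoning by noting that $\Adm$ is decidable: each of the five clauses in Definition~\ref{def:admissible} is decidable, since $S$, $E_{\mathsf{allow}}$, $E_{\mathsf{forbid}}$, and $D$ are decidable, and $\mathsf{Res}$ and $\mathsf{flows}(e)$ are finite lists. We can then search for the least such $k$.

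\textbf{Step 2: build the prefix.} Set $\Tr' := e_1,\ldots,e_k$, the prefix of length $k$. Then $\Tr' \sqsubseteq \Tr$, witnessed by the suffix $e_{k+1},\ldots,e_{|\Tr|}$.

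\textbf{Step 3: show the prefix is bad.} The key observation is that admissibility at position $k$ depends only on $e_k$ and $\Tr_{<k}$. Moreover, $\Tr'_{<k} = \Tr_{<k}$ and the $k$-th event of $\Tr'$ is $e_k$. The budget clause sums $\mathsf{cost}_r$ over exactly the events of that prefix, so it evaluates identically in both traces. Hence position $k$ witnesses $\Tr' \not\models \Contract$. If $k$ is chosen minimal, $\Tr'_{<k}$ additionally satisfies $\Contract$, so $\Tr'$ is a minimal bad prefix.

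The main obstacle, especially in Lean, is the bookkeeping in Step 3. One must prove that taking the length-$k$ prefix (\texttt{List.take}) commutes with indexing and with $\Tr_{<k}$ for indices at most $k$. I would first isolate a lemma stating that $\Adm(e, \Tr_{<k}, \Contract)$ is stable under truncating $\Tr$ at any length $\geq k$, and then apply it at the single index $k$.
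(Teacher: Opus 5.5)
Your proposal is correct and is essentially the paper's proof: you extract a failing index $k$ from Definition~\ref{def:satisfaction}, take the first $k$ events as $\Tr'$, and observe that the same admissibility failure recurs at position $k$ because $\Tr'_{<k}=\Tr_{<k}$. Your side remarks are accurate additions that the paper does not make: $\Tr':=\Tr$ already suffices by reflexivity of $\sqsubseteq$, and choosing $k$ least (using decidability of $\Adm$) yields a minimal bad prefix.
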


\begin{proof}
If $\Tr \not\models \Contract$, then by Definition~\ref{def:satisfaction} there
exists $k \le |\Tr|$ such that $\neg \Adm(e_k,\Tr_{<k},\Contract)$. Let $\Tr'$
be the prefix containing the first $k$ events. Then $\Tr' \sqsubseteq \Tr$ by
construction and $\Tr' \not\models \Contract$ because the same violation occurs
at position $k$. This is mechanized as \path{Formal.safety_finiteWitness}.
\end{proof}

\begin{example}[Trace Evaluation]
\label{ex:trace}
Consider the trace from \S\ref{sec:example} under the contract of Example~\ref{ex:contract}. Define parameter records:
{\footnotesize
\begin{align*}
p_1 &= \{\mathsf{dst}{=}\texttt{api.example.com}, \mathsf{port}{=}443, \mathsf{l4}{=}\mathsf{TCP}, \\
&\qquad \mathsf{method}{=}\mathsf{GET}, \mathsf{sizeBytes}{=}1024\} \\
p_2 &= \{\mathsf{dst}{=}\texttt{login.example.com}, \mathsf{port}{=}443, \mathsf{l4}{=}\mathsf{TCP}, \\
&\qquad \mathsf{method}{=}\mathsf{HEAD}, \mathsf{sizeBytes}{=}512\} \\
p_3 &= \{\mathsf{dst}{=}\texttt{api.example.com}, \mathsf{port}{=}443, \mathsf{l4}{=}\mathsf{TCP}, \\
&\qquad \mathsf{method}{=}\mathsf{GET}, \mathsf{sizeBytes}{=}256\} \\
p_4 &= \{\mathsf{dst}{=}\texttt{admin.example-corp.com}, \mathsf{port}{=}443, \mathsf{l4}{=}\mathsf{TCP}, \\
&\qquad \mathsf{method}{=}\mathsf{GET}, \mathsf{sizeBytes}{=}256\}
\end{align*}}
The concrete HTTP path (e.g.\ \path{/admin}) is not a field of the core
$\Net$ descriptor; these records therefore represent the trace up to the modeled
destination, method, port, protocol, accounting size, timestamp, and agent fields.
\begin{squarelist}
\item $e_1 = \langle \Net, p_1, t_1, \mathsf{Scanner} \rangle$: $\Adm(e_1, \epsilon, \Contract)$ holds (in scope, GET allowed, budget ok)
\item $e_2 = \langle \Net, p_2, t_2, \mathsf{Scanner} \rangle$: $\Adm(e_2, [e_1], \Contract)$ holds
\item $e_3 = \langle \Net, p_3, t_3, \mathsf{Scanner} \rangle$: $\Adm(e_3, [e_1,e_2], \Contract)$ holds
\item $e_4 = \langle \Net, p_4, t_4, \mathsf{Scanner} \rangle$: $\neg \Adm(e_4, [e_1,e_2,e_3], \Contract)$ since the destination is outside the compiled scope.
\end{squarelist}
The trace $[e_1, e_2, e_3]$ satisfies $\Contract$; the trace $[e_1, e_2, e_3, e_4]$ does not. The prefix $[e_1, e_2, e_3, e_4]$ is a finite witness to the violation.
\end{example}

\paragraph{Notation Summary.} Appendix Table~\ref{tab:notation} summarizes the key notation introduced in this section. All theorem claims built on this formal core are mapped to Lean theorem names in Appendix~\ref{sec:appendix-mechanization-tables}.

\section{Enforcement Soundness}
\label{sec:soundness}

The framework gives us a language for specifying contracts and a definition of satisfaction. But specification is not enforcement. The central question is whether a monitor-realized trace satisfies its contract regardless of the proposed agent actions.

This is the soundness question, and its answer is the central formal result. Under EOA, the proof quantifies over proposed traces and shows that the monitor-realized trace satisfies the contract. The result makes no behavioral assumption about the LLM; its assumptions are about mediation and event extraction.

\subsection{Reference Monitor Semantics}

\begin{definition}[Reference Monitor\leanmark]
\label{def:monitor}
A \emph{reference monitor} $\Mon_\Contract$ for contract $\Contract$ is a deterministic function over intercepted events. Each event has form $e = \langle \sigma, p, t, i \rangle \in \mathsf{Event}$, where $(\sigma,p,i)$ comes from event extraction and the timestamp $t$ is assigned by the environment at mediation time. Given event $e$ and current trace $\Tr$:
\[
\Mon_\Contract(e, \Tr) =
\begin{cases}
\mathsf{Allow}(e) & \text{if } \Adm(e, \Tr, \Contract) \\
\mathsf{Deny}(e, \mathsf{reason}) & \text{otherwise}
\end{cases}
\]
The monitor operates over intercepted effects, not planner intentions. A single planner or tool action, such as fetching a URL, may generate a finite sequence of mediated events after parsing and normalization. The formal monitor consumes that sequence incrementally; each intercepted effect is one event.
\end{definition}

\noindent\textit{Remark (Observability).} Admissibility $\Adm(e, \Tr, \Contract)$ is computable from observable data only: the proposed event $e$, the trace $\Tr$ of past events, and the contract $\Contract$. No internal agent state is required; enforcement is purely behavioral.

\begin{definition}[Monitored System]
\label{def:system}
Let $A$ be an arbitrary agent whose actions induce a finite proposed event trace after extraction. The \emph{monitored system} $\Sys = \Mon_\Contract(A)$ is the composition where every extracted event passes through $\Mon_\Contract$. Allowed events are appended; denials append nothing.
\end{definition}

\subsection{Supporting Lemmas and Soundness}

For any finite proposed trace $P$, let $\mathsf{Run}_\Contract(P)$ be the realized
subtrace kept by iterating the monitor from empty history.

\begin{lemma}[Prefix Transitivity\leanmark]
\label{lem:prefix-preserve}
If $\Tr_1 \sqsubseteq \Tr_2$ and $\Tr_2 \sqsubseteq \Tr_3$, then $\Tr_1 \sqsubseteq \Tr_3$.
\end{lemma}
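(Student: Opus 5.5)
The plan is to unfold the definition of the prefix relation from Definition~\ref{def:trace}, where $\Tr' \sqsubseteq \Tr$ means there exists a finite trace $\Tr''$ with $\Tr = \Tr' \cdot \Tr''$. From the hypotheses I obtain witnesses $\Tr_a$ and $\Tr_b$ with $\Tr_2 = \Tr_1 \cdot \Tr_a$ and $\Tr_3 = \Tr_2 \cdot \Tr_b$.

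Substituting the first equation into the second gives $\Tr_3 = (\Tr_1 \cdot \Tr_a) \cdot \Tr_b$. By associativity of concatenation on finite sequences this equals $\Tr_1 \cdot (\Tr_a \cdot \Tr_b)$. The concatenation $\Tr_a \cdot \Tr_b$ of two finite traces is again a finite trace, so it serves as the witness for $\Tr_1 \sqsubseteq \Tr_3$.

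In the Lean mechanization, traces are lists of events, so I will take the witness to be $\Tr_a \mathbin{+\!\!+} \Tr_b$ and close the goal by rewriting with the two hypotheses and \texttt{List.append\_assoc}. Equivalently, I could appeal to the library's transitivity lemma for \texttt{List.IsPrefix}.

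There is no real obstacle here. The only point requiring care is that the paper's $\sqsubseteq$ coincides with the list-append formulation of prefixes, including the orientation of the witness and the fact that empty suffixes are allowed. Once this definitional match is fixed, the argument is a single rewrite using associativity.
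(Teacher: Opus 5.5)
Your proposal is correct and follows the paper's proof: unfold the append characterization of $\sqsubseteq$, substitute $\Tr_2=\Tr_1\cdot s_1$ into $\Tr_3=\Tr_2\cdot s_2$, and use associativity to exhibit the witness $s_1\cdot s_2$. Your Lean plan, rewriting with \texttt{List.append\_assoc}, also matches the mechanization's existential-append definition of \texttt{Formal.Prefix}.
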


\begin{proof}
If $\Tr_2=\Tr_1\cdot s_1$ and $\Tr_3=\Tr_2\cdot s_2$ (prefix definition), then
$\Tr_3=\Tr_1\cdot (s_1\cdot s_2)$, so $\Tr_1 \sqsubseteq \Tr_3$.
(Mechanized as \path{Formal.prefix_trans}.)
\end{proof}

\begin{lemma}[Monitor Run Determinism\leanmark]
\label{lem:monitor-det}
For any contract $\Contract$ and proposed traces $P_1,P_2$, if $P_1=P_2$ then
$\mathsf{Run}_\Contract(P_1)=\mathsf{Run}_\Contract(P_2)$.
\end{lemma}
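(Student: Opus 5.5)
The claim is essentially a congruence fact: $\mathsf{Run}_\Contract$ is a function, so equal arguments give equal outputs. I will make this explicit by fixing the definition of $\mathsf{Run}_\Contract$ as a left fold of the monitor over the proposed trace. Define an auxiliary function $\mathsf{step}_\Contract(\Tr, e)$ that returns $\Tr\cdot e$ if $\Mon_\Contract(e,\Tr)=\mathsf{Allow}(e)$ and returns $\Tr$ otherwise. Then set $\mathsf{Run}_\Contract(P) := \mathsf{foldl}\,\mathsf{step}_\Contract\,\epsilon\,P$.

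This is well defined for the following reasons. By Definition~\ref{def:monitor}, $\Mon_\Contract$ is a deterministic function whose case split is on $\Adm(e,\Tr,\Contract)$. Admissibility is decidable: scope, allow, and forbid are decidable predicates by Definition~\ref{def:contract}; the budget clause is a finite conjunction over $\mathsf{Res}$ of comparisons of natural-number sums over a finite prefix; and the disclosure clause quantifies over a finite list $\mathsf{flows}(e)$. Hence $\mathsf{step}_\Contract$ is a total function, and so is the fold.

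Given $P_1=P_2$, I rewrite $P_2$ into $P_1$ (substitution of equals), and the goal becomes $\mathsf{Run}_\Contract(P_1)=\mathsf{Run}_\Contract(P_1)$, which holds by reflexivity. In Lean this should be a one-line \texttt{subst}/\texttt{rfl} or \texttt{congrArg}.

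If a more informative statement is wanted, I would instead prove by induction on $P$ that the realized trace is uniquely determined. The base case is $\mathsf{Run}_\Contract(\epsilon)=\epsilon$. The inductive step for $P\cdot e$ unfolds the fold and uses determinism of $\Mon_\Contract$ on the unique history $\mathsf{Run}_\Contract(P)$.

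The only real obstacle is definitional rather than mathematical. The timestamp $t$ is said to be assigned by the environment, so one must ensure that $\mathsf{Run}_\Contract$ is stated over fully formed events (timestamps included) and not over planner actions. Otherwise determinism would fail for environmental reasons. With events fixed as inputs, the monitor involves no hidden state or randomness, and the result follows.
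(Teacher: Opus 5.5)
Your proposal is correct and matches the paper's proof: both reduce the lemma to substituting equals into the function $\mathsf{Run}_\Contract$, which is a \texttt{subst}/\texttt{rfl} step in Lean. Your fold definition, decidability remarks, and timestamp caveat are only optional context for why $\mathsf{Run}_\Contract$ is a well-defined function. The lemma itself does not need decidability of $\Adm$, since congruence holds for any function; if you do keep that argument, it also requires $D$ to be decidable, as Proposition~\ref{prop:decidable} assumes.
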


\begin{proof}
If $P_1=P_2$, replacing $P_1$ by $P_2$ in the deterministic function
$\mathsf{Run}_\Contract$ gives equal outputs. Mechanized as
\path{Formal.monitor_state_deterministic}.
\end{proof}

\begin{theorem}[Enforcement Soundness\leanmark]
\label{thm:soundness}
For every finite $P$:
\[
\mathsf{Run}_\Contract(P) \models \Contract
\]
\end{theorem}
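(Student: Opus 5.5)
The plan is to make $\mathsf{Run}_\Contract$ precise as a left fold over the proposed trace and then prove a strengthened invariant by induction on $P$. Concretely, I define $\mathsf{Run}_\Contract$ through an auxiliary function $\mathsf{step}(\Tr, e)$ that returns $\Tr\cdot e$ if $\Mon_\Contract(e,\Tr)=\mathsf{Allow}(e)$ (i.e.\ $\Adm(e,\Tr,\Contract)$ holds) and $\Tr$ otherwise. Then $\mathsf{Run}_\Contract(P)=\mathsf{foldl}\ \mathsf{step}\ \epsilon\ P$, matching Definition~\ref{def:system}: allowed events are appended and denials append nothing. The statement is then an instance of the following invariant. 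For every history $\Tr$ with $\Tr\models\Contract$ and every finite $P$, $\mathsf{foldl}\ \mathsf{step}\ \Tr\ P \models \Contract$. Theorem~\ref{thm:soundness} follows by taking $\Tr=\epsilon$, since the empty trace satisfies $\Contract$ vacuously by Definition~\ref{def:satisfaction} (there are no positions $k$).

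The key lemma is one-step preservation: if $\Tr\models\Contract$, then $\mathsf{step}(\Tr,e)\models\Contract$. In the deny case the trace is unchanged, so there is nothing to show. In the allow case we have $\Adm(e,\Tr,\Contract)$, and we must check every position of $\Tr\cdot e$. For $k\le|\Tr|$, the $k$-th event and its strict prefix $(\Tr\cdot e)_{<k}=\Tr_{<k}$ are unchanged, so admissibility is inherited from $\Tr\models\Contract$. For $k=|\Tr|+1$, the event is $e$ and its prefix is exactly $\Tr$, so admissibility is the monitor's own guard. The induction on $P$ (generalizing over the history $\Tr$) then just applies this lemma at each step.

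The main obstacle is the bookkeeping in the allow case, namely showing that prefixes of $\Tr\cdot e$ at positions $k\le|\Tr|$ coincide with prefixes of $\Tr$. The point that matters is that admissibility depends only on the event and its prior prefix, and not on later events. The budget clause in particular sums costs only over $\Tr_{<k}$, so appending $e$ cannot retroactively break earlier positions. I will handle this with a small list lemma, $(\Tr\cdot e)_{<k}=\Tr_{<k}$ and $(\Tr\cdot e)_k=\Tr_k$ for $k\le|\Tr|$, which the Lean development would discharge with standard \texttt{take}/\texttt{get} lemmas. An alternative that avoids index arithmetic is to characterize satisfaction inductively: $\epsilon\models\Contract$, and $\Tr\cdot e\models\Contract$ iff $\Tr\models\Contract$ and $\Adm(e,\Tr,\Contract)$. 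One-step preservation is then immediate.

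Note that Lemma~\ref{lem:monitor-det} is not needed for soundness itself, since $\mathsf{Run}_\Contract$ is a function by construction. Lemma~\ref{lem:prefix-preserve} would only be used if we additionally want to record that each intermediate history is a prefix of the final realized trace. No assumption on how $P$ was produced enters the argument, which is exactly why the theorem quantifies over all agent behaviour. EOA is needed only to identify the deployment's realized effects with $\mathsf{Run}_\Contract(P)$.
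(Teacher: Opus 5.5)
Your proposal is correct and follows essentially the paper's argument: induction on $P$, where a denial leaves the realized trace unchanged and an allow appends an event whose admissibility against the current history is exactly the monitor's guard. The only cosmetic difference is that you generalize over the fold's starting history, whereas the paper inducts on the last proposal with $\Tr_0=\mathsf{Run}_\Contract(P_0)$; both rest on the same one-step preservation fact.
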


\begin{proof}
By induction on finite $P$. Base case $P=\epsilon$: $\mathsf{Run}_\Contract(P)=\epsilon$
and $\epsilon \models \Contract$. Step case: let $\Tr_0=\mathsf{Run}_\Contract(P_0)$
with $\Tr_0 \models \Contract$ by induction. For the next proposal, deny leaves $\Tr_0$
unchanged; allow gives $\Adm(e,\Tr_0,\Contract)$ and appends $e$, so $\Tr_0\cdot[e]
\models \Contract$ by Definition~\ref{def:satisfaction}. (Mechanized as
\path{Formal.enforcement_soundness}.)
\end{proof}

\begin{corollary}[Agent-Behavior Independence]
\label{cor:agnostic}
If deployed mediation implements $\mathsf{Run}_\Contract$ faithfully (EOA), then realized effects satisfy $\Contract$ independent of the agent's internal policy, including adversarial or prompt-injected models.
\end{corollary}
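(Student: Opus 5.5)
The plan is to reduce Corollary~\ref{cor:agnostic} to Theorem~\ref{thm:soundness} by showing that the agent enters the argument only through the proposed trace it induces. First I fix an arbitrary agent $A$: any internal policy, including adversarial, backdoored, or prompt-injected behavior, possibly stochastic. By Definition~\ref{def:system}, the agent's actions together with event extraction yield a finite proposed event trace $P_A$. For a stochastic agent, I fix one execution, so the claim is pathwise and then holds for every outcome.

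Next I use the hypothesis that deployed mediation implements $\mathsf{Run}_\Contract$ faithfully under EOA. EOA supplies two things. Event-mapping completeness says every realized in-profile effect appears as an event of $P_A$. Mediation coverage says each event reaches $\Mon_\Contract$ before execution and cannot be executed after denial. Together these give the identity
\[
\mathsf{Realized}(A) = \mathsf{Run}_\Contract(P_A).
\]
Here $\mathsf{Realized}(A)$ is the sequence of realized $\SigmaEff$-effects. Lemma~\ref{lem:monitor-det} then shows that this realized trace depends on $A$ only through $P_A$. Two agents with different internals that propose the same events produce the same realized trace.

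Finally I instantiate Theorem~\ref{thm:soundness} at $P = P_A$. This gives $\mathsf{Run}_\Contract(P_A) \models \Contract$, and hence the realized trace satisfies $\Contract$. Because Theorem~\ref{thm:soundness} quantifies universally over finite $P$ and uses no property of $A$, the conclusion holds for every agent.

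The main obstacle is the step establishing $\mathsf{Realized}(A) = \mathsf{Run}_\Contract(P_A)$. This is not a mechanized lemma but a deployment obligation. The proof must state it explicitly as the content of ``faithfully (EOA)'' and must not claim to derive it. I would also note what falls outside the conclusion: effects that bypass mediation and undeclared flows. The guarantee covers only mediated $\SigmaEff$-effects, which is the boundary formalized later in Theorem~\ref{thm:boundary}.
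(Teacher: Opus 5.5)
Your proposal is correct and matches the paper's argument, which gives no separate proof: the corollary is read off from Theorem~\ref{thm:soundness} quantifying over every finite proposed trace $P$. As in the paper, the identification of the realized trace with $\mathsf{Run}_\Contract(P)$ is the content of the faithful-mediation (EOA) hypothesis, a deployment obligation rather than a derived fact. Your appeal to Lemma~\ref{lem:monitor-det} is harmless but not needed for the satisfaction claim.
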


\paragraph{Trusted Computing Base.} The TCB consists of: (1) the reference monitor implementation, (2) the mediation points that ensure all $\SigmaEff$-effects are intercepted (the EOA), and (3) the contract specification $\Contract$. The LLM and agent logic are explicitly \emph{outside} the TCB.

\section{Contract Algebra}
\label{sec:algebra}

Soundness covers one contract. Deployments often combine overlapping constraints: engagement scope, organizational policy, regulation, and per-agent limits. A single monolithic contract is possible, but modular contracts improve ownership boundaries, change isolation, reuse, and auditability. We therefore study refinement and a compatible composition operator, while stating only the one-way soundness theorem that is mechanized.
Section~\ref{sec:soundness} established semantic soundness; decidability supports modular runtime reasoning.

\paragraph{Scope.} This section concerns \emph{trace satisfaction} (Definition~\ref{def:satisfaction}) in the mechanized core. Architecture-level obligations are handled in \S\ref{sec:adaptation} through the abstract $\mathsf{Safe}$ predicate and bridge assumptions.

\subsection{Refinement}

\begin{definition}[Contract Refinement\leanmark]
\label{def:refinement}
Contract $\Contract'$ \emph{refines} contract $\Contract$, written $\Contract' \sqsubseteq \Contract$, iff:
\begin{enumerate}
\item $S' \subseteq S$ (scope is narrower or equal)
\item $E'_{\mathsf{allow}} \subseteq E_{\mathsf{allow}}$ (fewer allowed effects)
\item $E_{\mathsf{forbid}} \subseteq E'_{\mathsf{forbid}}$ (more forbidden effects)
\item $\forall r.\; B'(r) \leq B(r)$ (tighter budgets)
\item $D' \subseteq D$ (stricter disclosure)
\item $\forall r.\; r \in \mathsf{Res} \Rightarrow r \in \mathsf{Res}'$ (every resource checked by $\Contract$ is checked by $\Contract'$)
\item $\mathsf{cost}' = \mathsf{cost}$ and $\mathsf{flows}' = \mathsf{flows}$ (same accounting/flow extraction semantics)
\end{enumerate}
\end{definition}

\begin{theorem}[Refinement Soundness\leanmark]
\label{thm:refinement}
If $\Contract' \sqsubseteq \Contract$ and $\Tr \models \Contract'$, then $\Tr \models \Contract$.
\end{theorem}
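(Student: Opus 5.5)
The proof reduces to a pointwise lemma: for every event $e$ and prefix $\Tr$, $\Adm(e,\Tr,\Contract')$ implies $\Adm(e,\Tr,\Contract)$. Once this holds, the theorem follows from Definition~\ref{def:satisfaction}. Given $\Tr \models \Contract'$, fix any $k \in \{1,\ldots,|\Tr|\}$. We have $\Adm(e_k,\Tr_{<k},\Contract')$ by hypothesis. The lemma, instantiated at $e=e_k$ and prefix $\Tr_{<k}$, then gives $\Adm(e_k,\Tr_{<k},\Contract)$. No induction over the trace is needed, since satisfaction is a universally quantified conjunction of per-position admissibility facts and the prefixes are identical on both sides.

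The lemma itself is proved clause by clause through Definition~\ref{def:admissible}, matching each clause with one refinement condition of Definition~\ref{def:refinement}.

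\textbf{Scope.} From $S'(\mathsf{target}(e))$ and $S' \subseteq S$ we get $S(\mathsf{target}(e))$. The function $\mathsf{target}$ is fixed and does not depend on the contract.

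\textbf{Allowed.} This follows directly from $E'_{\mathsf{allow}} \subseteq E_{\mathsf{allow}}$.

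\textbf{Not forbidden.} This is the contravariant case. Suppose $(\sigma,p) \in E_{\mathsf{forbid}}$. Then $E_{\mathsf{forbid}} \subseteq E'_{\mathsf{forbid}}$ places it in $E'_{\mathsf{forbid}}$, contradicting the hypothesis.

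\textbf{Disclosure.} For each $(c,s) \in \mathsf{flows}(e)$, the equation $\mathsf{flows}'=\mathsf{flows}$ lets us read the hypothesis at $\mathsf{flows}'(e)$. That gives $D'(c,s)$, and $D' \subseteq D$ then gives $D(c,s)$.

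The step needing the most care is the budget clause, because three refinement conditions interact there. Take $r \in \mathsf{Res}$; we must show $\mathsf{cost}_r(e) + \sum_j \mathsf{cost}_r(e_j) \le B(r)$. Condition 6 gives $r \in \mathsf{Res}'$, so the hypothesis yields the same inequality under $\mathsf{cost}'$ and with bound $B'(r)$. Condition 7 ($\mathsf{cost}'=\mathsf{cost}$) rewrites the left-hand side. Chaining with $B'(r) \le B(r)$ completes the clause. Both hidden dependencies matter here:
\begin{itemize}
\item resource inclusion has to go from $\mathsf{Res}$ into $\mathsf{Res}'$, not the reverse;
\item the cost functions have to coincide, since otherwise the accumulated sums over $\Tr$ would not be comparable.
\end{itemize}
In a Lean rendering, this is where I would rewrite with the equality of cost functions before applying \texttt{le\_trans}. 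The remaining clauses are one-line implications, so the theorem is essentially the pointwise lemma lifted through the universal quantifier over positions.
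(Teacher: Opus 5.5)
Your proposal is correct and follows essentially the same route as the paper: fix a position $k$, obtain $\Adm(e_k,\Tr_{<k},\Contract')$ from $\Tr \models \Contract'$, and discharge the five admissibility clauses against refinement conditions (1)--(7), with the budget clause combining $r \in \mathsf{Res} \Rightarrow r \in \mathsf{Res}'$, $\mathsf{cost}'=\mathsf{cost}$, and $B'(r)\le B(r)$ exactly as you describe. Your pointwise lemma is precisely the helper the paper cites as \texttt{Formal.admissible\_of\_refines}.
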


\begin{proof}
Assume $\Tr \models \Contract'$ and $\Contract' \sqsubseteq \Contract$. To prove
$\Tr \models \Contract$, fix $k \in \{1,\ldots,|\Tr|\}$ and let
$e_k=\langle \sigma_k,p_k,t_k,i_k \rangle$. From $\Tr \models \Contract'$, we have
$\Adm(e_k,\Tr_{<k},\Contract')$.

Scope and allow are immediate from refinement clauses (1) and (2).
For forbid, clause (3) gives $E_{\mathsf{forbid}} \subseteq E'_{\mathsf{forbid}}$,
so $(\sigma_k,p_k)\notin E'_{\mathsf{forbid}}$ implies
$(\sigma_k,p_k)\notin E_{\mathsf{forbid}}$.

For budget, fix $r \in \mathsf{Res}$. By clause (6), $r \in \mathsf{Res}'$.
Admissibility under $\Contract'$ gives
\[
\mathsf{cost}'_r(e_k)+\sum_{j=1}^{k-1}\mathsf{cost}'_r(e_j)\le B'(r).
\]
By clause (7), $\mathsf{cost}'=\mathsf{cost}$, so the left side is the cumulative
cost under $\Contract$; by clause (4), $B'(r)\le B(r)$. Hence
\[
\mathsf{cost}_r(e_k)+\sum_{j=1}^{k-1}\mathsf{cost}_r(e_j)\le B(r).
\]

For disclosure, let $(c,s)\in \mathsf{flows}(e_k)$. By clause (7), this is also in
$\mathsf{flows}'(e_k)$; admissibility under $\Contract'$ gives $D'(c,s)$, and clause
(5) ($D' \subseteq D$) gives $D(c,s)$.

Thus $\Adm(e_k,\Tr_{<k},\Contract)$ holds. Since $k$ was arbitrary,
$\Tr \models \Contract$. This proof is mechanized as
\path{Formal.refinement_soundness}, using the helper theorem
\path{Formal.admissible_of_refines}.
\end{proof}

\subsection{Composition}

\begin{definition}[Contract Composition\leanmark]
\label{def:composition}
The \emph{compatible composition} of contracts $\Contract_1$ and $\Contract_2$, written $\mathsf{compose}(\Contract_1,\Contract_2)$, is:
\begin{align*}
S_{12} &= S_1 \cap S_2 \\
E_{\mathsf{allow},12} &= E_{\mathsf{allow}, 1} \cap E_{\mathsf{allow}, 2} \\
E_{\mathsf{forbid},12} &= E_{\mathsf{forbid}, 1} \cup E_{\mathsf{forbid}, 2} \\
B_{12}(r) &= \min(B_1(r), B_2(r)) \\
D_{12} &= D_1 \cap D_2 \\
\mathsf{Res}_{12} &= \mathsf{Res}_1,\quad
\mathsf{cost}_{12} = \mathsf{cost}_1,\quad
\mathsf{flows}_{12} = \mathsf{flows}_1
\end{align*}
\end{definition}

\noindent This definition intentionally projects the resource, cost, and flow
components from $\Contract_1$. The projection is sound only under the
compatibility condition below; without compatibility, the composition operator should not be
read as semantic conjunction.

\begin{definition}[Compatibility\leanmark]
\label{def:compose-compat}
$\mathsf{Compat}(\Contract_1,\Contract_2)$ holds iff
$\mathsf{Res}_1=\mathsf{Res}_2$, $\mathsf{cost}_1=\mathsf{cost}_2$, and
$\mathsf{flows}_1=\mathsf{flows}_2$.
\end{definition}

\begin{theorem}[Composition Soundness\leanmark]
\label{thm:composition}
If $\mathsf{Compat}(\Contract_1,\Contract_2)$ and
$\Tr \models (\mathsf{compose}(\Contract_1,\Contract_2))$, then
$\Tr \models \Contract_1$ and $\Tr \models \Contract_2$.
\end{theorem}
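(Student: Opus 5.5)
The cleanest route is to reduce composition soundness to Refinement Soundness (Theorem~\ref{thm:refinement}). We show that under $\mathsf{Compat}(\Contract_1,\Contract_2)$ we have both $\mathsf{compose}(\Contract_1,\Contract_2) \sqsubseteq \Contract_1$ and $\mathsf{compose}(\Contract_1,\Contract_2) \sqsubseteq \Contract_2$ in the sense of Definition~\ref{def:refinement}. Applying Theorem~\ref{thm:refinement} twice to $\Tr \models \mathsf{compose}(\Contract_1,\Contract_2)$ then yields $\Tr \models \Contract_1$ and $\Tr \models \Contract_2$.

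The refinement clauses are checked one by one against Definition~\ref{def:composition}, writing $\Contract_{12}$ for the composite.
\begin{itemize}
\item \textbf{Scope, allow, disclosure (1), (2), (5).} These follow from $S_1\cap S_2 \subseteq S_i$, $E_{\mathsf{allow},1}\cap E_{\mathsf{allow},2}\subseteq E_{\mathsf{allow},i}$ and $D_1\cap D_2\subseteq D_i$, since pointwise conjunction implies each conjunct.
\item \textbf{Forbid (3).} It follows from $E_{\mathsf{forbid},i}\subseteq E_{\mathsf{forbid},1}\cup E_{\mathsf{forbid},2}$.
\item \textbf{Budget (4).} It is $\min(B_1(r),B_2(r))\le B_i(r)$.
\end{itemize}
For $i=1$, the resource and accounting clauses (6) and (7) hold trivially, since $\mathsf{Res}_{12}=\mathsf{Res}_1$, $\mathsf{cost}_{12}=\mathsf{cost}_1$ and $\mathsf{flows}_{12}=\mathsf{flows}_1$ by definition.

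The case $i=2$ is the only place where compatibility is used, and it is the step I expect to be the main obstacle. Clause (6) needs every $r\in\mathsf{Res}_2$ to lie in $\mathsf{Res}_{12}=\mathsf{Res}_1$, and clause (7) needs $\mathsf{cost}_{12}=\mathsf{cost}_2$ and $\mathsf{flows}_{12}=\mathsf{flows}_2$. Both follow by rewriting with the equalities $\mathsf{Res}_1=\mathsf{Res}_2$, $\mathsf{cost}_1=\mathsf{cost}_2$ and $\mathsf{flows}_1=\mathsf{flows}_2$ supplied by Definition~\ref{def:compose-compat}. Without this hypothesis the argument genuinely fails. The composite would check budgets and flows only with $\Contract_1$'s accounting, so a trace could exceed $\Contract_2$'s budget on a resource tracked only by $\Contract_2$, or emit a flow declared only by $\mathsf{flows}_2$ that $D_2$ forbids. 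This is why the theorem is stated one-way and under $\mathsf{Compat}$.

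If one prefers not to route through refinement, the same content can be proved directly. Fix $k$, unfold $\Adm(e_k,\Tr_{<k},\mathsf{compose}(\Contract_1,\Contract_2))$, and discharge the five admissibility clauses of Definition~\ref{def:admissible} for each $\Contract_i$ exactly as in the proof of Theorem~\ref{thm:refinement}. The direct version is the pattern I would expect the mechanization to mirror, reusing the helper that admissibility transfers along refinement.
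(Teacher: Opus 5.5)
Your proof is correct, but it is organized differently from the paper's. You isolate the algebraic fact $\mathsf{compose}(\Contract_1,\Contract_2)\sqsubseteq\Contract_i$ for $i=1,2$ under $\mathsf{Compat}$, and then apply Theorem~\ref{thm:refinement} twice. All seven refinement clauses check out as you describe, with compatibility needed only for clauses (6) and (7) at $i=2$.

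The paper's proof never forms this refinement. It fixes a position $k$ and unfolds $\Adm(e_k,\Tr_{<k},\mathsf{compose}(\Contract_1,\Contract_2))$. It then projects the five admissibility clauses onto $\Contract_1$ via the left-projected resources, costs and flows. Finally it projects them onto $\Contract_2$, using $\mathsf{Compat}$ to transport the budget and disclosure clauses. In effect it re-runs the refinement argument inline.

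Your route is more modular. It reuses an established theorem and pinpoints exactly which obligations compatibility discharges. It also yields a reusable structural fact the paper does not state: the composite is a common lower bound of its components in the refinement preorder.

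The paper's direct route is self-contained and does not inherit the exact shape of Definition~\ref{def:refinement}. Because it works clause by clause, it shows that the $\Contract_2$ half already follows from weaker hypotheses than $\mathsf{Compat}$. For example, $\mathsf{Res}_2\subseteq\mathsf{Res}_1$, agreement of the cost functions on $\mathsf{Res}_2$, and $\mathsf{flows}_2(e)\subseteq\mathsf{flows}_1(e)$ suffice. Refinement clause (7), by contrast, demands outright equality of the extractors.
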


\begin{proof}
Assume $\mathsf{Compat}(\Contract_1,\Contract_2)$ and
$\Tr \models (\mathsf{compose}(\Contract_1,\Contract_2))$. Fix
$k \in \{1,\ldots,|\Tr|\}$ and let $e_k$ be the $k$-th event of $\Tr$.
By Definition~\ref{def:satisfaction}, we have
$\Adm(e_k,\Tr_{<k},\mathsf{compose}(\Contract_1,\Contract_2))$.
For $\Contract_1$, admissibility projects directly from this fact:
scope/allow from intersections, and not-forbid from union.
For budget, use the fact that the composed budget is at most $B_1(r)$ together
with left-projected resources/accounting. For disclosure, membership in the
intersection policy implies membership in $D_1$, again using left-projected
flow extraction.
For $\Contract_2$, the first three clauses are symmetric. Compatibility ensures
the same resource set and the same cost/flow extractors for $\Contract_1$ and
$\Contract_2$, so the budget bound and disclosure membership transport from
$\mathsf{compose}(\Contract_1,\Contract_2)$ to $\Contract_2$.
Hence both admissibility obligations hold for $e_k$.
Since $k$ was arbitrary:
\[
\Tr \models \Contract_1
\quad\text{and}\quad
\Tr \models \Contract_2.
\]
\emph{Mechanization note.} This proof maps to the one-way composition theorem
listed in Table~\ref{tab:theorem-coverage}. We do not claim the converse:
satisfying both components is not stated to imply satisfying their composition.
\end{proof}

\subsection{Decidability}

Algebraic properties are not useful for runtime enforcement if checking them requires solving undecidable problems. We therefore state decidability assumptions and the core cost bound.

\begin{proposition}[Decidability of Admissibility\leanmark]
\label{prop:decidable}
For contract $\Contract=\langle S,E_{\mathsf{allow}},E_{\mathsf{forbid}},B,D,\mathsf{Res},\mathsf{cost},\mathsf{flows}\rangle$ over finite $\SigmaEff$, with decidable predicates $S$, $E_{\mathsf{allow}}$, $E_{\mathsf{forbid}}$, and $D$, finite tracked resource list $\mathsf{Res}$, and computable functions $B$, $\mathsf{cost}_r$, and $\mathsf{flows}$, the admissibility predicate $\Adm(e, \Tr, \Contract)$ is decidable.
After event extraction and canonicalization, and assuming unit-cost evaluation of these primitive predicates/functions on concrete inputs, one core admissibility check has the following cost:
\[
m_e := |\mathsf{flows}(e)|.
\]
\begin{squarelist}
\item \textbf{Offline}: Given only $(e, \Tr, \Contract)$, decidable in $O(|\mathsf{Res}| \cdot |\Tr| + m_e)$.
\item \textbf{Online}: Cached prefix sums give $O(|\mathsf{Res}| + m_e)$ per event.
\end{squarelist}
If $m_e$ is bounded by a small constant, the core check is $O(|\mathsf{Res}| \cdot |\Tr|)$ offline and $O(|\mathsf{Res}|)$ online.
\end{proposition}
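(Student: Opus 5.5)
The plan is to treat $\Adm(e,\Tr,\Contract)$ as a finite conjunction of the five clauses of Definition~\ref{def:admissible}, show each clause is decidable, and use closure of decidable propositions under conjunction. Scope, allow, and not-forbidden are immediate: $\mathsf{target}(e)$ is computed by a case split on the finite class $\sigma\in\SigmaEff$ (one projection per class), and $S$, $E_{\mathsf{allow}}$, $E_{\mathsf{forbid}}$ are decidable by hypothesis, with negation preserving decidability for clause (3). In Lean this amounts to assembling \texttt{Decidable} instances via \texttt{instDecidableAnd} and \texttt{instDecidableNot}.

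For the budget clause, I would note that for each fixed $r$ the quantity $\mathsf{cost}_r(e)+\sum_{j=1}^{|\Tr|}\mathsf{cost}_r(e_j)$ is a computable natural number, since it is a finite fold over the list $\Tr$ of a computable function. Comparison with $B(r)$ is decidable order on $\mathbb{N}$. The universal quantifier ranges over the finite list $\mathsf{Res}$, so it is decidable as a bounded quantifier (\texttt{List.decidableBAll}). The disclosure clause is analogous: $\mathsf{flows}(e)$ is mechanized as a finite list, and $D$ is decidable, so $\forall (c,s)\in\mathsf{flows}(e).\,D(c,s)$ is a decidable bounded quantifier.

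For the cost bounds, under unit-cost primitives, the first three clauses cost $O(1)$ and disclosure costs $O(m_e)$. Offline, each $r\in\mathsf{Res}$ requires one pass over $\Tr$, giving $O(|\mathsf{Res}|\cdot|\Tr|)$. Online, the monitor maintains a counter $c_r=\sum_{j}\mathsf{cost}_r(e_j)$ for each $r\in\mathsf{Res}$. The check is then $c_r+\mathsf{cost}_r(e)\le B(r)$, and on $\mathsf{Allow}$ the counter updates in $O(1)$. This gives $O(|\mathsf{Res}|+m_e)$ per event, and the stated simplifications follow when $m_e=O(1)$.

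The main obstacle is not mathematical but one of faithfulness. The cached-counter argument needs an invariant that the counters equal the prefix sums of the \emph{realized} trace $\mathsf{Run}_\Contract(P)$. I would prove this by the same induction as Theorem~\ref{thm:soundness}: denied events leave both the trace and the counters unchanged, and allowed events append to the trace and add $\mathsf{cost}_r(e)$ to each counter. I would state the complexity part at paper level rather than mechanize it, since Lean checks only the decidability half.
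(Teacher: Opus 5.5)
Your proposal is correct and follows the paper's proof essentially step for step. Both argue clause-by-clause decidability of the five admissibility conditions, with bounded quantification over the finite lists $\mathsf{Res}$ and $\mathsf{flows}(e)$, then cost one pass over $\Tr$ per resource offline and cached prefix sums online. Your added invariant, that the cached counters equal the prefix sums of the realized trace (proved by the soundness-style induction), is a detail the paper leaves implicit rather than a different route.
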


\begin{proof}
We analyze each condition in Definition~\ref{def:admissible}:

\textbf{(1) Scope:} Evaluate $S(\mathsf{target}(e))$. Decidable by assumption on $S$.

\textbf{(2)--(3) Effect predicates:} Evaluate $E_{\mathsf{allow}}(\sigma, p)$ and $E_{\mathsf{forbid}}(\sigma, p)$. Decidable by assumption.

\textbf{(4) Budget:} For each $r \in \mathsf{Res}$, compute and compare
$\mathsf{cost}_r(e) + \sum_{e' \in \Tr} \mathsf{cost}_r(e') \leq B(r)$.
Since $\mathsf{Res}$ is finite and $B,\mathsf{cost}_r$ are computable, this check is decidable.
Offline this is $O(|\mathsf{Res}| \cdot |\Tr|)$; online with cached prefix sums it is $O(|\mathsf{Res}|)$.

\textbf{(5) Disclosure:} Check $\forall (c,s) \in \mathsf{flows}(e).\; D(c,s)$.
With finite $\mathsf{flows}(e)$ and decidable $D$, this is decidable and runs in
$O(m_e)$.

\textbf{Total:} offline $O(|\mathsf{Res}| \cdot |\Tr| + m_e)$; online
$O(|\mathsf{Res}| + m_e)$.

\textit{Mechanization note.} Proposition~\ref{prop:decidable} maps to the Lean
decidability instance listed in Table~\ref{tab:theorem-coverage}. Complexity
bounds are analytical.
\end{proof}

\noindent\textit{Remark (Practical complexity).} In typical deployments, predicates use pattern matching ($O(1)$), $|\mathsf{Res}|$ is small and fixed, and $m_e$ is bounded. Under these conditions, the core admissibility step is constant-time per event. End-to-end monitor latency still includes extraction, DNS/IP compilation, hashing, path normalization, flow extraction, and predicate matching outside the core theorem.

\section{Adaptation Assumption Schema}
\label{sec:adaptation}

The previous sections reason about a fixed architecture state. Real deployments change structure over time: agents may spawn sub-agents, systems may swap models or tools, and failed components may be rebound during execution. This section provides assumption plumbing, not a standalone adaptation guarantee. We define \emph{contract-preserving} adaptation rules, prove that finite sequences of such rules preserve $\mathsf{Safe}$, and transfer that fact to trace satisfaction without changing Sections~4--6's event/trace semantics.

\subsection{Architecture Graphs}

\begin{definition}[Architecture State\leanmark]
\label{def:graph}
Let $\mathsf{Graph}$ be an abstract type of architecture states.
\end{definition}

\begin{definition}[Monitor/Trace Interface\leanmark]
\label{def:monitor-instance}
Let $\mathsf{Traces} : \mathsf{Graph} \to (\mathsf{Trace} \to \mathsf{Prop})$ map each architecture state to the set of traces it can realize, and let $\mathsf{Safe} : \mathsf{Graph} \to \mathsf{Prop}$ be a safety predicate capturing ``contract-admissible architecture'' assumptions.
We write $\mathsf{Traces}(G,\Tr)$ as shorthand for $(\mathsf{Traces}(G))(\Tr)$.
\end{definition}

\begin{definition}[Safe State\leanmark]
\label{def:admissible-graph}
$G$ is contract-admissible iff $\mathsf{Safe}(G)$.
\end{definition}

\subsection{Adaptation Rules}

\begin{definition}[Adaptation Rule\leanmark]
\label{def:rewrite}
A rule is a total function $p : \mathsf{Graph} \to \mathsf{Graph}$.
\end{definition}

\begin{definition}[Preserving Rule\leanmark]
\label{def:preserving}
Rule $p$ is contract-preserving iff $\forall G.\; \mathsf{Safe}(G) \Rightarrow \mathsf{Safe}(p(G))$.
\end{definition}

\begin{proposition}[Preservation under Rule Sequences\leanmark]
\label{prop:preserving}
If $G_0$ is safe and every adaptation rule is contract-preserving for
$\mathsf{Safe}$, then the final state after applying any finite rule sequence is
safe.
\end{proposition}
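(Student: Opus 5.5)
The plan is to formalize a finite rule sequence as a list $\pi = [p_1,\ldots,p_n]$ of adaptation rules in the sense of Definition~\ref{def:rewrite}. Its effect on a state is the left fold $\mathsf{apply}([],G)=G$ and $\mathsf{apply}(p::\pi,G)=\mathsf{apply}(\pi,p(G))$. The statement then reads: if $\mathsf{Safe}(G_0)$ and every $p\in\pi$ is contract-preserving (Definition~\ref{def:preserving}), then $\mathsf{Safe}(\mathsf{apply}(\pi,G_0))$. I would prove this by structural induction on $\pi$, generalizing the starting state $G_0$ so that the induction hypothesis can be used at the intermediate state.

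\textbf{Base case.} Here $\pi=[]$, so $\mathsf{apply}([],G_0)=G_0$, and safety is exactly the hypothesis $\mathsf{Safe}(G_0)$.

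\textbf{Step case.} Here $\pi=p::\pi'$. Since $p$ belongs to the sequence, it is contract-preserving, so $\mathsf{Safe}(G_0)$ gives $\mathsf{Safe}(p(G_0))$. Every rule of $\pi'$ is also in $\pi$, so the hypothesis that all rules are preserving carries over to $\pi'$. Applying the (state-generalized) induction hypothesis at the state $p(G_0)$ yields $\mathsf{Safe}(\mathsf{apply}(\pi',p(G_0)))$, which by definition is $\mathsf{Safe}(\mathsf{apply}(\pi,G_0))$.

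An equivalent route is to show that the invariant $\mathsf{Safe}$ is closed under each step and then induct on the length $n$ with intermediate states $G_k=p_k(G_{k-1})$. This gives $\mathsf{Safe}(G_k)$ for all $k\le n$, and in particular for the final state $G_n$.

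\textbf{Main obstacle.} There is essentially no mathematical difficulty here. The one point needing care is setting up the induction correctly: the initial state must be universally quantified in the induction hypothesis, because the tail of the sequence starts from $p(G_0)$ rather than $G_0$. One must also decide whether the hypothesis reads ``every rule in the sequence is preserving'' or ``every rule is preserving''; the first is weaker and implies the second, so I would prove the first. With fold-right composition the order of rule application reverses, but the same induction on the list works with the roles of head and tail swapped. No results from earlier sections are needed beyond the definitions, since this proposition concerns only $\mathsf{Safe}$. The transfer to trace satisfaction is left to the separate bridge assumption.
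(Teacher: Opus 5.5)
Your proof is correct and is essentially the paper's argument. The paper inducts on the length $n$ with $G_{j+1}=p_{j+1}(G_j)$, peeling off the last rule so the start state need not be generalized; that is your ``equivalent route,'' and your head-first list induction with a generalized start state is the same invariant argument in fold-left form. One wording slip: ``every rule in the sequence is preserving'' is the weaker hypothesis, so it does not imply ``every rule is preserving''; rather, the result proved under the weaker hypothesis implies the result under the stronger one, which is what your argument needs.
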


\begin{proof}[Proof of Proposition~\ref{prop:preserving}]
Let the sequence be $p_1,\ldots,p_n$, and define $G_{j+1}=p_{j+1}(G_j)$ with
$G_0$ given. We prove $\mathsf{Safe}(G_n)$ by induction on $n$.

Base case $n=0$: $G_n=G_0$, so $\mathsf{Safe}(G_n)$ by assumption.

Inductive step: assume the claim for length $n$. For length $n{+}1$, by the
induction hypothesis $\mathsf{Safe}(G_n)$ holds. Since $p_{n+1}$ is
contract-preserving, $\mathsf{Safe}(G_{n+1})$ follows from
$\mathsf{Safe}(G_n)\Rightarrow \mathsf{Safe}(p_{n+1}(G_n))$.

Hence the final state after any finite preserving sequence is safe.
Mechanized as \path{Formal.preservation_under_rules}.
\end{proof}

\subsection{Extensibility Schema}

\paragraph{Bridge obligations.}
The adaptation argument uses two deployment obligations. First, $\mathsf{Safe}(G)$ must imply the mediation assumptions required by EOA. Second, $\mathsf{Safe}(G)$ must imply that every trace realizable from $G$ satisfies $\Contract$. The Lean artifact includes \path{Formal.eoa_from_safe} and \path{Formal.safe_trace_bridge} to make these assumption transports explicit; they are not standalone guarantees.
\label{lem:eoa}\label{lem:bridge}

\begin{theorem}[Adaptation Assumption Schema\leanmark]
\label{thm:compositionality}
Let $G_0$ be an initial state, let $p_1,\ldots,p_n$ be adaptation rules, and let $G_n$ be the state obtained by applying them in sequence. Assume:
\begin{enumerate}
\item $\mathsf{Safe}(G_0)$
\item every adaptation rule is contract-preserving for $\mathsf{Safe}$
\item bridge assumption: $\forall G.\, \mathsf{Safe}(G) \Rightarrow (\forall \Tr,\ \mathsf{Traces}(G,\Tr)\Rightarrow \Tr \models \Contract)$
\end{enumerate}
Then $\mathsf{Safe}(G_n)$ and every trace in $\mathsf{Traces}(G_n)$ satisfies $\Contract$.
\end{theorem}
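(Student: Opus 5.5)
The argument has two stages: first establish $\mathsf{Safe}(G_n)$, then turn that into trace satisfaction using the bridge assumption. No induction on traces is needed; all quantification over traces sits inside hypothesis (3). The only induction is over the length of the rule sequence, and Proposition~\ref{prop:preserving} has already done it.

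\textbf{Step 1 (safety of the final state).} Write $G_{j+1}=p_{j+1}(G_j)$ exactly as in the proof of Proposition~\ref{prop:preserving}. Hypotheses (1) and (2) are precisely that proposition's premises, so invoking it gives $\mathsf{Safe}(G_n)$ directly. This yields the first conjunct of the conclusion.

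\textbf{Step 2 (trace satisfaction).} Fix an arbitrary trace $\Tr$ with $\mathsf{Traces}(G_n,\Tr)$. Instantiate the universally quantified bridge assumption (3) at $G:=G_n$. This gives
\[
\mathsf{Safe}(G_n)\Rightarrow \bigl(\forall \Tr,\ \mathsf{Traces}(G_n,\Tr)\Rightarrow \Tr\models\Contract\bigr).
\]
Discharging the antecedent with Step~1 and then applying the result to $\Tr$ and its membership hypothesis yields $\Tr\models\Contract$. Since $\Tr$ was arbitrary, every trace in $\mathsf{Traces}(G_n)$ satisfies $\Contract$, which is the second conjunct.

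\textbf{Main obstacle.} The logic is elementary, so I do not expect a mathematical obstacle. The one point needing care is that Proposition~\ref{prop:preserving} is stated for \emph{any} finite rule sequence, and it must be applied to the specific sequence $p_1,\ldots,p_n$ with the same left-to-right application order that defines $G_n$. In a mechanization, I would state $G_n$ as a fold of the rule list over $G_0$ so that it matches the proposition's statement syntactically. If the orders disagree, I would re-run that proposition's short induction on $n$ inline.

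I would also add a remark that the theorem transfers no soundness of its own. The semantic content is carried entirely by assumption (3), which corresponds to \path{Formal.safe_trace_bridge}. This is consistent with the section's framing of the result as assumption plumbing rather than an adaptation guarantee.
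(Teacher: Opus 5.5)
Your proposal is correct and matches the paper's proof: you obtain $\mathsf{Safe}(G_n)$ from Proposition~\ref{prop:preserving} using hypotheses (1) and (2), then instantiate the bridge assumption (3) at $G_n$ and discharge its antecedent. Your remark that the trace-level content comes entirely from the bridge assumption is consistent with the paper's description of this result as assumption plumbing rather than an adaptation guarantee.
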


\begin{proof}
By Proposition~\ref{prop:preserving} and assumptions (1) and (2),
$\mathsf{Safe}(G_n)$ holds.

Apply assumption (3) to $G_n$. Since $\mathsf{Safe}(G_n)$ holds, we obtain
\[
\forall \Tr,\ \mathsf{Traces}(G_n,\Tr)\Rightarrow \Tr \models \Contract.
\]
Hence $G_n$ is safe and every trace realizable from $G_n$ satisfies
$\Contract$.
Mechanized as \path{Formal.adaptation_compositionality}.
\end{proof}

\paragraph{Instantiation Note.}
Concrete architecture-graph constraints are encoded through $\mathsf{Safe}$ and
the bridge assumptions above (for example monitor placement, sandboxing,
routing, and synchronization). These obligations are explicit and must be
discharged by a concrete deployment model; they are not hidden by the
mechanized theorem. Accordingly, this section should be read as an explicit
extensibility schema parameterized by deployment assumptions.

\subsection{Concrete Deployment Instantiation}

To reduce assumption distance, the artifact also includes a concrete deployment model (\texttt{Formal.ConcreteAdaptation}) where states are finite sets of proposed traces plus a mediation flag. In this model:
\begin{enumerate}
\item $\mathsf{Safe}(d)$ is exactly ``mediation flag is true,''
\item realized traces are monitor outputs when mediation is enabled, and
\item rule preservation is checked over explicit rule lists.
\end{enumerate}

\begin{theorem}[Concrete Adaptation Soundness\leanmark]
\label{thm:concrete-adaptation}
Let $d_0$ be a concrete deployment state with mediation enabled, and let
$p_1,\ldots,p_n$ be adaptation rules such that each rule in the sequence preserves
mediation. If $d_n$ is the result of applying the sequence, then $d_n$ remains
safe and every trace realized by $d_n$ satisfies $\Contract$.
\end{theorem}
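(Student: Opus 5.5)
The plan is to obtain Theorem~\ref{thm:concrete-adaptation} as an instance of the abstract schema, Theorem~\ref{thm:compositionality}, applied to the concrete model \texttt{Formal.ConcreteAdaptation}. I instantiate $\mathsf{Graph}$ as the type of concrete deployment states $d$. Each state is a finite set of proposed traces together with a mediation flag. I set $\mathsf{Safe}(d)$ to be ``the mediation flag of $d$ is true.'' I define $\mathsf{Traces}(d,\Tr)$ to hold iff the flag is true and there is a proposed trace $P$ in $d$ with $\Tr=\mathsf{Run}_\Contract(P)$. The definition should fix realized traces only when mediation is enabled; if the artifact instead gives unmediated states arbitrary realized traces, this does not matter, because the bridge is only required under $\mathsf{Safe}$.

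I then discharge the three hypotheses of Theorem~\ref{thm:compositionality} in order.

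\begin{enumerate}
\item Hypothesis (1), $\mathsf{Safe}(d_0)$, is exactly the assumption that $d_0$ has mediation enabled.
\item Hypothesis (2), that every rule is contract-preserving, is the statement's assumption that each $p_j$ preserves mediation. Unfolding Definition~\ref{def:preserving} gives precisely $\mathsf{flag}(d)=\mathsf{true}\Rightarrow\mathsf{flag}(p_j(d))=\mathsf{true}$. Because the statement only quantifies over rules in the given sequence, I will invoke Proposition~\ref{prop:preserving} in its list form rather than the ``every rule'' form. Concretely, I redo its induction over the list $p_1,\ldots,p_n$, using preservation only for the rule applied at each step.
\item Hypothesis (3), the bridge, is where the real content lies. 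Fix $d$ with $\mathsf{Safe}(d)$ and a $\Tr$ with $\mathsf{Traces}(d,\Tr)$. Unfolding gives some proposed $P$ with $\Tr=\mathsf{Run}_\Contract(P)$. Theorem~\ref{thm:soundness} then yields $\Tr\models\Contract$ directly, for arbitrary $P$. This is exactly the point where agent-behavior independence (Corollary~\ref{cor:agnostic}) enters.
\end{enumerate}

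With the three hypotheses in place, Theorem~\ref{thm:compositionality} delivers both conclusions: $\mathsf{Safe}(d_n)$, which here means mediation is still on, and satisfaction of $\Contract$ for every trace realized by $d_n$.

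I expect the main obstacle to be definitional rather than mathematical. I must check that the concrete notion of ``realized by $d_n$'' matches $\mathsf{Traces}$ in a way the bridge can consume. If realization is phrased as membership in the image of $\mathsf{Run}_\Contract$ over a finite list, I will need a small membership lemma, for example \texttt{List.mem\_map}, to extract the witness $P$. A second possible wrinkle is that a rule may alter the proposed-trace set while keeping the flag true. This is harmless, because the bridge holds uniformly over all proposed traces and never depends on which traces a state contains. Beyond these points, the proof reduces to Theorem~\ref{thm:soundness} plus the same induction as in Proposition~\ref{prop:preserving}.
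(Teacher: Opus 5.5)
Your proposal is correct and follows the paper's proof. Safety is exactly the mediation flag, and list-wise preservation carries it from $d_0$ to $d_n$; the concrete safe-to-trace bridge then gives $\Tr \models \Contract$ for every realized trace, which you rightly ground in Theorem~\ref{thm:soundness} because a mediated state's realized traces are $\mathsf{Run}_\Contract$ outputs, while routing through Theorem~\ref{thm:compositionality} is only framing since you redo the sequence-only (``list-preserving'') induction the paper invokes.
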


\begin{proof}
In the concrete model, ``safe'' is exactly mediation enabled. By list-preserving
adaptation, applying $p_1,\ldots,p_n$ to $d_0$ yields a final state $d_n$ that
still has mediation enabled.

Now fix any trace $\Tr$ realized by $d_n$. Since $d_n$ is safe, the concrete
safe-to-trace bridge gives $\Tr \models \Contract$.

Therefore $d_n$ is safe and all realized traces satisfy $\Contract$.
Mechanized as \path{Formal.dep_adaptation_soundness}.
\end{proof}

\section{Impossibility Results}
\label{sec:impossibility}

The preceding sections established what alignment contracts prove in the model: enforcement soundness under EOA, refinement, one-way composition, and an adaptation assumption schema. The same model also defines limits. A deployment that treats these theorems as universal agent safety would miss attacks outside the observation and mediation assumptions.

We establish three negative results. First, pre-admission forbidden-effect absence is undecidable for Turing-complete tool languages and contracts with at least one forbidden descriptor, under the supplied-reduction condition below. Second, tools that bypass mediation void the soundness theorem's premise. Third, behaviors outside the monitor observation function lie outside what effect-based monitors can enforce. These results delineate the \emph{enforcement boundary} and align with classic undecidability boundaries for protection systems~\cite{harrison1976protection}.

\subsection{Dynamic Tool Acquisition}

Runtime monitoring at the chosen mediation boundary can mediate effects from arbitrary tools; the monitor need not know in advance what effects a tool will attempt. For HTTP-level contracts this requires the typed mediation boundary described in \S\ref{sec:threat}; lower-level syscall mediation enforces only the lower-level event profile. Two distinct problems arise with dynamic tool acquisition.

\begin{definition}[Forbidden-Effect Absence Certification Problem]
\label{def:preamd}
Fix a contract $\Contract$. Let $\mathsf{events}(T,x)$ denote the possibly
infinite stream of proposed events produced by tool $T$ on input $x$. Write
$e \in \mathsf{events}(T,x)$ when $e$ occurs at some finite position in that
stream; for terminating tools this is just membership in the finite emitted
trace. Define:
for each event $e$, write
$e=\langle \sigma_e,p_e,t_e,i_e\rangle$, and let
\[
\mathsf{forbidEvent}_{\Contract}(e) \iff
E_{\mathsf{forbid}}(\sigma_e,p_e)
\]
and
\[
\begin{aligned}
\mathsf{ForbiddenAttempt}_{\Contract}(T) \iff{}&
\exists x.\;\exists e \in \mathsf{events}(T,x).\\
&\mathsf{forbidEvent}_{\Contract}(e)
\end{aligned}
\]
and
\[
\mathsf{FORBIDFREE}_{\Contract}(T) \iff
\neg \mathsf{ForbiddenAttempt}_{\Contract}(T).
\]
The forbidden-effect absence problem asks whether static admission can decide $\mathsf{FORBIDFREE}_{\Contract}(T)$. This is a subproblem, not full contract certification: a tool that is $\mathsf{FORBIDFREE}$ may still violate scope, allow, budget, or disclosure checks.
\end{definition}

The following theorem is a transfer lemma for this forbidden-effect subproblem: the reduction is the load-bearing obligation. A concrete reduction can be obtained whenever the tool language can simulate a Turing machine and emit at least one forbidden descriptor $d^\star$ if the simulated machine halts. For a tool $T_x$ that ignores its runtime input, simulates machine $M$ on the encoded instance $x$, and emits $d^\star$ exactly when $M(x)$ halts, we get
\[
\mathsf{Halts}(x) \iff \mathsf{ForbiddenAttempt}_{\Contract}(T_x)
\]
and therefore
\[
\mathsf{FORBIDFREE}_{\Contract}(T_x) \iff \neg \mathsf{Halts}(x).
\]
If forbidden-effect absence were decidable, non-halting would be decidable, and therefore halting would be decidable. Full static contract certification would be at least as hard on contract instances where all other admissibility checks are permissive. The mechanized statement below abstracts the lower bound as a supplied reduction from an undecidable predicate to $\mathsf{FORBIDFREE}$.

\begin{theorem}[Forbidden-Effect Absence under Supplied Reduction\leanmark]
\label{thm:impossibility}
Let $U$ be an undecidable predicate over instances $x$, and let
$\mathsf{encode}$ map $x$ to tools such that
\[
U(x) \iff \mathsf{FORBIDFREE}_{\Contract}(\mathsf{encode}(x)).
\]
Then the unary predicate $T \mapsto \mathsf{FORBIDFREE}_{\Contract}(T)$ is undecidable.
\end{theorem}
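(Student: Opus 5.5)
The plan is a direct contrapositive reduction argument, mirroring the informal halting sketch that precedes the statement. We assume for contradiction that $T \mapsto \mathsf{FORBIDFREE}_{\Contract}(T)$ is decidable, i.e.\ there is a (total, computable) decision procedure $\mathsf{dec}$ on tools such that $\mathsf{dec}(T)=\mathsf{true}$ iff $\mathsf{FORBIDFREE}_{\Contract}(T)$. We then build a decision procedure for $U$ by precomposition: $x \mapsto \mathsf{dec}(\mathsf{encode}(x))$. By the supplied equivalence $U(x) \iff \mathsf{FORBIDFREE}_{\Contract}(\mathsf{encode}(x))$, this procedure returns $\mathsf{true}$ exactly on instances satisfying $U$. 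That contradicts the undecidability of $U$.

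The key steps, in order, are as follows. First, fix the notion of decidability being used. In the mechanized form, the natural choice is the abstract one: a predicate $P$ is decidable iff there exists a Boolean-valued function $f$ with $f(y)=\mathsf{true} \iff P(y)$. In the paper-level reading it is computable decidability. Second, assume such an $f$ for $\mathsf{FORBIDFREE}_{\Contract}$. Third, define $g := f \circ \mathsf{encode}$. Fourth, rewrite with the reduction hypothesis pointwise to obtain $g(x)=\mathsf{true} \iff U(x)$. Fifth, conclude that $U$ is decidable, which contradicts the hypothesis. No property of contracts, admissibility, or Theorem~\ref{thm:soundness} is needed: the theorem is purely a transfer lemma, and all domain content sits in the supplied $\mathsf{encode}$.

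The main obstacle is making the notion of decidability robust enough that the composition step is legitimate. Under the computable reading, we need $\mathsf{encode}$ itself to be computable, so that $f\circ\mathsf{encode}$ is computable. The statement as written only says that $\mathsf{encode}$ ``maps'' instances to tools. We would therefore make this explicit, either as a standing assumption that reductions are computable (many-one reductions) or by noting that for the halting instantiation $T_x$ is obtained from $x$ by a trivial syntactic construction.

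Conversely, if decidability is formalized purely as existence of a Boolean function, classically every predicate is decidable and the statement becomes vacuous or false. In Lean, we would therefore either work constructively with a computability predicate (e.g.\ via a \texttt{Computable}-style class) or phrase undecidability as the negation of a supplied decider structure that is closed under precomposition with $\mathsf{encode}$. We would try the second option first, since it keeps the proof a one-line composition plus rewrite.
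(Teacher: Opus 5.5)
Your proposal is correct and matches the paper's proof: assume a decider $D_{\mathsf{ff}}$ for $\mathsf{FORBIDFREE}_{\Contract}$, define $D_U(x) := D_{\mathsf{ff}}(\mathsf{encode}(x))$, and apply the supplied equivalence pointwise to conclude that $D_U$ decides $U$, which is a contradiction. Your added caveat is a legitimate point that the paper's argument leaves implicit: $\mathsf{encode}$ must be computable, or equivalently the chosen notion of decider must be closed under precomposition with $\mathsf{encode}$.
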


\begin{proof}
Suppose, for contradiction, that $\mathsf{FORBIDFREE}_{\Contract}$ is
decidable. Then there exists a decider $D_{\mathsf{ff}}$ such that for any
tool $T$,
\[
D_{\mathsf{ff}}(T)=1 \iff \mathsf{FORBIDFREE}_{\Contract}(T).
\]
Define
\[
D_U(x):=D_{\mathsf{ff}}(\mathsf{encode}(x)).
\]
By the reduction premise,
\[
U(x) \iff \mathsf{FORBIDFREE}_{\Contract}(\mathsf{encode}(x))
\iff D_U(x)=1.
\]
So $D_U$ decides $U$, contradicting undecidability
of $U$. Therefore $T \mapsto \mathsf{FORBIDFREE}_{\Contract}(T)$ is undecidable.
Mechanized as \path{Formal.preadmission_undecidable_of_reduction}.
\end{proof}

\noindent\textit{Remark.} Sound static analysis can over-approximate forbidden effect attempts, but no sound \emph{and} complete analysis exists for this subproblem on Turing-complete languages.

\begin{observation}[Mediation-Bypassing Tools]
\label{obs:bypass}
If dynamic tool acquisition permits tools that bypass the mediation layer
(kernel modules, raw hardware access, out-of-band channels), then the EOA is
violated and Theorem~\ref{thm:soundness} is inapplicable.
\end{observation}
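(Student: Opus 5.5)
The observation is really a statement that a hypothesis of Theorem~\ref{thm:soundness} fails, so the plan is to make precise what that theorem proves and show that a bypassing tool breaks the link between that statement and the deployment's realized effects. Theorem~\ref{thm:soundness} is a statement about $\mathsf{Run}_\Contract(P)$ for a proposed trace $P$. Its deployment reading, Corollary~\ref{cor:agnostic}, holds only when EOA guarantees two things: every realized $\SigmaEff$-effect is extracted as an event in $P$, and only events allowed by $\Mon_\Contract$ are executed. So the first step is to restate EOA as a correspondence premise. For every execution, the realized effect trace $\Tr_{\mathsf{real}}$ must equal $\mathsf{Run}_\Contract(P)$ for the extracted proposal $P$. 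Equivalently, this is the conjunction of event-mapping completeness and mediation coverage from \S\ref{sec:threat}.

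The second step is to show that a mediation-bypassing tool falsifies this premise. Take a tool $T$, such as a kernel module, raw socket on hardware, or out-of-band channel, that realizes an in-profile effect $e$ without $e$ passing through $\Mon_\Contract$. Then $e$ occurs in $\Tr_{\mathsf{real}}$ but was never submitted to the monitor, so $e$ is not an allowed element of $\mathsf{Run}_\Contract(P)$. Hence $\Tr_{\mathsf{real}} \neq \mathsf{Run}_\Contract(P)$, and EOA fails.

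The third step is to make ``inapplicable'' concrete with a counterexample, which shows that no weaker argument rescues the conclusion. Pick any contract $\Contract$ and any descriptor $e^\star$ with $\neg\Adm(e^\star,\epsilon,\Contract)$. For instance, in Example~\ref{ex:contract}, $e^\star$ could be an $\Exec$ event, which is forbidden. Let the bypassing tool realize $e^\star$ directly. The realized trace $[e^\star]$ violates $\Contract$ by Definition~\ref{def:satisfaction}, and Proposition~\ref{prop:safety} gives the finite witness. Meanwhile $\mathsf{Run}_\Contract(P)\models\Contract$ still holds, as Theorem~\ref{thm:soundness} says. The theorem remains true as a statement about $\mathsf{Run}_\Contract$; it simply says nothing about $\Tr_{\mathsf{real}}$.

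The main obstacle is not the mathematics but the modeling. The formal core has no object for ``realized trace independent of the monitor,'' so I would introduce a minimal deployment semantics in the style of \texttt{Formal.ConcreteAdaptation}. In that model, a mediation flag set to false lets proposed events be realized unfiltered. I would then state the counterexample there, since this is exactly where the concrete bridge of Theorem~\ref{thm:concrete-adaptation} loses its $\mathsf{Safe}$ premise.
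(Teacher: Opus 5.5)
Your proposal is correct, and your first two steps are exactly the paper's argument: an in-profile effect realized by a bypassing tool appears in the real execution trace but in no monitor-realized trace, so the EOA premise is false and Theorem~\ref{thm:soundness} cannot be transferred to the deployment. Your additions go beyond the paper's proof, which only shows that the premise fails, not that the guarantee is actually lost: you note that the theorem itself stays true of $\mathsf{Run}_\Contract(P)$, with EOA entering only through Corollary~\ref{cor:agnostic}, and your $\Exec$-event counterexample shows that the realized trace can genuinely violate $\Contract$.
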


\begin{proof}
Theorem~\ref{thm:soundness} assumes complete mediation of $\SigmaEff$-effects
(EOA). Suppose there exists a realized effect $e^\star$ from a dynamically
acquired tool that bypasses $\Mon_\Contract$. Then $e^\star$ appears in the
real execution trace but not in any trace in $\mathsf{Traces}(\Mon_\Contract(A))$.
Hence the EOA premise is false, so Theorem~\ref{thm:soundness} cannot be
applied.
\end{proof}

\begin{corollary}[Deployment Condition]
To preserve the soundness theorem's assumptions, dynamically acquired tools must be confined
within sandboxes that maintain complete mediation (for example, WebAssembly
with capability-gated syscall proxying, or seccomp-bpf filtering to the
monitored interface).
\end{corollary}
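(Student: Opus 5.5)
The Deployment Condition is best read as the contrapositive of Observation~\ref{obs:bypass}, combined with a sufficiency argument. The plan is to establish two directions. \emph{Necessity}: if dynamically acquired tools are not confined to an interface that maintains complete mediation, then the EOA can fail and Theorem~\ref{thm:soundness} cannot be invoked. \emph{Sufficiency}: if every acquired tool runs inside a sandbox whose only path to $\SigmaEff$-effects is the monitored interface, then the EOA premise is restored. Once it holds, Theorem~\ref{thm:soundness} and Corollary~\ref{cor:agnostic} apply to the realized trace.

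For necessity, invoke Observation~\ref{obs:bypass} directly. Any unconfined tool with a path to an effect $e^\star$ that does not traverse $\Mon_\Contract$ yields a real execution trace that is not of the form $\mathsf{Run}_\Contract(P)$ for any proposed $P$. The soundness premise is then unavailable, so confinement that preserves mediation is required to keep that premise.

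For sufficiency, model the sandbox as a deployment state $d$ in the concrete adaptation model of Theorem~\ref{thm:concrete-adaptation}, with the mediation flag set. Treat tool acquisition as an adaptation rule $p_{\mathsf{acq}}$ that adds the tool's proposed traces to $d$. The key step is to show that $p_{\mathsf{acq}}$ is mediation-preserving whenever the tool is confined. Here the deployment obligation enters explicitly: a capability-gated WebAssembly host or a seccomp-bpf filter guarantees that every syscall or effect-producing import is routed to the proxy that emits events in $\SigmaEff$ before execution. This gives the three load-bearing extraction obligations (event-mapping completeness, mediation coverage, and disclosure extraction) for the new tool. With preservation established, Theorem~\ref{thm:concrete-adaptation}, or its abstract version Theorem~\ref{thm:compositionality}, transfers safety across any finite sequence of acquisitions. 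The bridge then gives that every realized trace is a monitor run and hence satisfies $\Contract$ by Theorem~\ref{thm:soundness}.

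The main obstacle is the preservation step. It is not a mathematical lemma but a deployment fact about the sandbox: that the filtered interface is exactly the monitored interface, with no escape via shared memory, pre-opened file descriptors, or side channels. I would not try to prove this. Instead I would state it as an explicit hypothesis on $p_{\mathsf{acq}}$, in the same style as the bridge obligations of \S\ref{sec:adaptation}, so that the corollary becomes a direct instantiation of the mechanized adaptation theorem. Under this reading, the sandbox examples are illustrations of how to discharge that hypothesis rather than claims that are themselves proved.
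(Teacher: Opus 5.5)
Your necessity half is the paper's argument. The corollary has no separate proof; it is read off Observation~\ref{obs:bypass} in contrapositive form: a tool that bypasses $\Mon_\Contract$ falsifies the mediation part of EOA, so Theorem~\ref{thm:soundness} cannot be invoked. Since the statement only asserts that confinement is \emph{required}, that half already covers it.

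One claim in that half should go: that the real trace is not $\mathsf{Run}_\Contract(P)$ for any $P$. If $P$ is arbitrary, this can be false. The real trace $\Tr$ may happen to satisfy $\Contract$ (e.g.\ when $e^\star$ is admissible and last), and then $\mathsf{Run}_\Contract(\Tr)=\Tr$. What fails is the premise, not necessarily the conclusion. The premise fails simply because $e^\star$ was realized without passing the monitor.

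The sufficiency half is not part of the statement and, as written, overclaims. First, the concrete model of Theorem~\ref{thm:concrete-adaptation} does not represent confinement at all. A rule that only adds proposed traces and leaves the mediation flag unchanged is trivially preserving, so your ``key step'' has no content inside the model. All the weight sits on the hypothesis that the deployed sandbox matches the flag, which is just EOA restated.

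More substantively, confinement does not discharge all three extraction obligations. A seccomp-bpf filter or capability-gated syscall proxy can at best give mediation coverage. Event-mapping completeness for the HTTP-level $\Net$ profile needs a typed mediator that sees the normalized method field; the paper stresses that syscall-level hooks do not expose HTTP methods under TLS, and that lower-level mediation enforces only the lower-level profile. Disclosure extraction is an obligation on the flow extractor, not on the sandbox.

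So confinement is necessary, as the corollary says, but not sufficient by itself. A correct converse must also assume the sandbox funnels every $\SigmaEff$-effect to a mediator that meets those obligations. At that point it collapses to Theorem~\ref{thm:soundness} under EOA.
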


\subsection{Observation Boundary Schema}

We record an observation-boundary schema for a suppression monitor $\Mon$ and
observation function $\mathsf{obs}$. The result is \emph{parametric}: it is not
a new characterization for an arbitrary fixed monitor unless the forward and
backward obligations below are proved independently. In particular, the
backward direction usually requires constructing a canonical suppression
monitor for the target behavior or proving the obligation for the chosen
monitor.

\paragraph{Model Assumptions.} We assume:
\begin{enumerate}
\item \textbf{Suppression-only monitor}: $\Mon$ can allow or block each proposed
event; it cannot insert, reorder, or delay events.
\item \textbf{Deterministic observation-based decisions}: the decision on event
$e$ after history $\Tr$ depends only on $\mathsf{obs}(\Tr)$ and
$\mathsf{obs}(e)$.
\end{enumerate}
For a finite proposed trace $P$, let $\mathsf{Run}_{\Mon}(P)$ denote the
realized trace obtained by iterating $\Mon$ from empty history, exactly as in
Section~\ref{sec:soundness} (with $\Mon$ in place of $\Mon_\Contract$).

\begin{definition}[Observation Function\leanmark]
\label{def:obs}
An \emph{observation function} $\mathsf{obs} : \mathsf{Event} \to \mathsf{Obs}$
extracts the attributes visible to the monitor. For alignment contracts,
$\mathsf{obs}$ contains every field used by $\Adm$:
\begin{squarelist}
\item $\mathsf{obs}(\langle \Net,p,t,i\rangle)$ includes $p.\mathsf{sizeBytes}$, other NET predicate fields, and $(t,i)$.
\item $\mathsf{obs}(\langle \FS,p,t,i\rangle)$ includes FS predicate fields and
$(t,i)$.
\item $\mathsf{obs}(\langle \Exec,p,t,i\rangle)$ includes EXEC predicate fields
and $(t,i)$.
\end{squarelist}
This includes timestamps as event fields, accounting byte counts for bandwidth budgets,
and command arguments for filtering. The core contract does not include a
temporal predicate; engagement windows require an extension or an admission
wrapper. For
$\Tr=\langle e_1,\ldots,e_n\rangle$, the observed trace is
$\mathsf{obs}(\Tr)=\langle\mathsf{obs}(e_1),\ldots,\mathsf{obs}(e_n)\rangle$.

What $\mathsf{obs}$ \emph{omits} is payload content: transmitted byte strings,
file contents written, or RPC parameters beyond the method signature. These
payload fields are absent from the event model (Definition~\ref{def:event}) and
therefore absent from $\mathsf{obs}$.
\end{definition}

\paragraph{Observation adequacy for contracts.}
For the relation to contract satisfaction below, $\mathsf{obs}$ must expose all fields used by admissibility. We assume that if $\mathsf{obs}(e)=\mathsf{obs}(e')$, then $\mathsf{target}(e)=\mathsf{target}(e')$, the effect descriptors match, $\forall r.\,\mathsf{cost}_r(e)=\mathsf{cost}_r(e')$, and $\mathsf{flows}(e)=\mathsf{flows}(e')$. Equivalently, an implementation may include target, descriptor, cost labels, and flow labels directly in $\mathsf{obs}$.

\begin{definition}[Observable Behavior\leanmark]
\label{def:observable}
A behavior $\mathcal{B} \subseteq \mathsf{Event}^*$ is \emph{$\mathsf{obs}$-observable} iff:
\[
\forall \Tr_1, \Tr_2.\; \mathsf{obs}(\Tr_1) = \mathsf{obs}(\Tr_2) \implies (\Tr_1 \in \mathcal{B} \iff \Tr_2 \in \mathcal{B})
\]
\end{definition}

\begin{definition}[Enforceability\leanmark]
\label{def:enforceable}
A behavior $\mathcal{B}$ is \emph{enforceable} by monitor $\Mon$ iff:
\begin{enumerate}
\item \textbf{Soundness}: $\forall P.\; \mathsf{Run}_{\Mon}(P) \in \mathcal{B}$.
\item \textbf{Transparency}: $\forall P.\; P \in \mathcal{B} \Rightarrow \mathsf{Run}_{\Mon}(P)=P$.
\end{enumerate}
Following Ligatti et al.~\cite{ligatti2005edit}: soundness prevents violations,
and transparency preserves already-good traces.
\end{definition}

\begin{theorem}[Enforcement Boundary Schema\leanmark]
\label{thm:boundary}
Assume:
\begin{enumerate}
\item \textbf{Forward obligation}: if $\mathcal{B}$ is enforceable by $\Mon$,
then $\mathcal{B}$ is $\mathsf{obs}$-observable and has the finite bad-prefix property.
\item \textbf{Backward obligation}: if $\mathcal{B}$ is $\mathsf{obs}$-observable
and has the finite bad-prefix property, then $\mathcal{B}$ is enforceable by $\Mon$.
\end{enumerate}
Under these obligations, $\mathcal{B}$ is enforceable by $\Mon$ iff it is
$\mathsf{obs}$-observable and has the finite bad-prefix property.
\end{theorem}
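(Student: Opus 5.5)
The theorem is a schema, so the plan is to assemble the biconditional directly from the two supplied obligations, with no new semantic content. Fix the behavior $\mathcal{B}$, the suppression monitor $\Mon$, and the observation function $\mathsf{obs}$. Let $E$ abbreviate ``$\mathcal{B}$ is enforceable by $\Mon$'' in the sense of Definition~\ref{def:enforceable}. Let $O$ abbreviate the conjunction ``$\mathcal{B}$ is $\mathsf{obs}$-observable (Definition~\ref{def:observable}) and has the finite bad-prefix property.'' The goal $E \iff O$ then splits into two implications.

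For the direction $E \Rightarrow O$, assume $E$ and apply the forward obligation (1). This immediately yields both conjuncts of $O$. For the direction $O \Rightarrow E$, assume the two conjuncts of $O$ and apply the backward obligation (2), which yields $E$, meaning soundness and transparency of $\mathsf{Run}_{\Mon}$ with respect to $\mathcal{B}$. Combining the two implications gives the biconditional. In Lean this is an \texttt{Iff.intro} with one line per direction. Read this way, the theorem states only that the two obligations jointly give an exact characterization; the model assumptions (suppression-only, observation-based decisions) are not used in the proof.

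The main point needing care is not logical difficulty but being explicit about the quantifier structure. The obligations must be stated for the same fixed $\mathcal{B}$, $\Mon$, and $\mathsf{obs}$ that appear in the conclusion, rather than being quantified differently, so that they can be instantiated directly. The proof should also make clear that ``finite bad-prefix property'' means the same predicate in the hypothesis and in the conclusion, namely the analogue of Proposition~\ref{prop:safety} lifted to an arbitrary $\mathcal{B}$.

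I would add a remark after the proof explaining how the obligations might be discharged in practice. The backward direction would be proved by building a canonical suppression monitor that allows $e$ after $\Tr$ iff $\Tr\cdot[e]$ is not a bad prefix. This decision depends only on observations by $\mathsf{obs}$-observability, so it satisfies model assumption (2). This remark is context only and is not part of the theorem's proof.
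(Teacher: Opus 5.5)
Your proposal is correct and matches the paper's proof: the $(\Rightarrow)$ direction is exactly the forward obligation, the $(\Leftarrow)$ direction is exactly the backward obligation, and the suppression-only and observation-based model assumptions are not used. Your closing remark about a canonical suppression monitor agrees with the paper's own note that the backward obligation usually requires such a construction, and you rightly keep it outside the proof.
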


\begin{proof}
$(\Rightarrow)$ If $\mathcal{B}$ is enforceable, obligation (1) yields that
$\mathcal{B}$ is $\mathsf{obs}$-observable and has the finite bad-prefix property.

$(\Leftarrow)$ If $\mathcal{B}$ is $\mathsf{obs}$-observable and a safety
behavior in this finite-trace sense, obligation (2) yields enforceability.

Hence enforceability is equivalent to
($\mathsf{obs}$-observability $\land$ finite bad-prefix property), provided the two obligations hold.
Mechanization mapping appears in Appendix Table~\ref{tab:theorem-coverage}.
\end{proof}

\paragraph{What Lies Outside.} Theorem~\ref{thm:boundary} identifies the proof obligations needed for an enforceability characterization. Behaviors that are either not $\mathsf{obs}$-observable or not finite bad-prefix properties remain outside this boundary. These are model limits.

\begin{description}
\item[Payload-based covert channels.] Information encoded in payload content (the byte string sent/written) is not in our event model and hence not in $\mathsf{obs}$. Two concrete executions may map to the same event and observation trace while carrying different payload bytes, one leaking secrets. Preventing payload-based exfiltration requires content inspection beyond effect mediation.

\item[Timing channels.] Although $t$ is in $\mathsf{obs}$, the core contract does not include temporal predicates or inter-event timing patterns. An adversary can encode information in delays between events. Preventing timing channels requires either timing normalization or explicit timing-pattern predicates in an extended contract language.

\item[Semantic intent.] Whether an action is ``malicious'' depends on the model's reasoning, not observable effects. Two traces with identical $\mathsf{obs}$-projections, one from a benign model and one adversarial, are indistinguishable. Intent is not $\mathsf{obs}$-observable.

\item[Liveness properties.] ``The agent will eventually find a vulnerability'' has no finite bad prefix; at any point, success might still come. Liveness is not a finite bad-prefix property, hence not enforceable by suppression monitors in this schema.
\end{description}

\noindent\textit{Relation to this model.} Define
\[
\mathcal{B}_\Contract := \{\Tr \mid \Tr \models \Contract\}.
\]
$\mathcal{B}_\Contract$ is $\mathsf{obs}$-observable under the observation-adequacy assumption because $\Adm$ depends only
on $\mathsf{obs}$-visible fields, and it has finite bad-prefix witnesses by
Proposition~\ref{prop:safety}. Its finite bad-prefix set is:
\[
\mathsf{Bad}_\Contract
= \{\Tr \mid \Tr \not\models \Contract\}
= \{\Tr \mid \exists k \in \{1,\ldots,|\Tr|\}.\; \neg\Adm(e_k,\Tr_{<k},\Contract)\}.
\]
Since $\Adm$ is decidable, $\mathsf{Bad}_\Contract$ is decidable, so
$\mathsf{Run}_{\Mon_\Contract}$ is computable. Thus contract enforcement lies
inside the boundary of Theorem~\ref{thm:soundness}.

\section{Related Work}
\label{sec:related}

Prior work establishes offensive capability for security-focused agents: PentestGPT, PentestAgent, MAPTA, and A1 study autonomous vulnerability discovery and exploitation across web and smart-contract settings~\cite{deng2024pentestgpt,shen2025pentestagent,david2025mapta,gervais2025a1}. Benchmarks such as AgentBench, AgentDojo, WebArena, SWE-bench, and CyberSecEval measure multi-step autonomous reasoning and execution~\cite{liu2024agentbench,debenedetti2024agentdojo,zhou2024webarena,jimenez2024swebench,bhatt2024cyberseceval2,wan2024cyberseceval3}. We instead use formally checkable scope and disclosure constraints to control capability.

Threat analyses for agentic systems emphasize indirect prompt injection, malicious tool outputs, and backdoor behavior~\cite{greshake2023ipi,zhan2024injecagent,yi2025bipia,evertz2025patterns,agentbackdoors2024}. Existing mitigations (prompt filtering, firewalls, and intent classifiers) reduce attack surface but remain heuristic and bypassable in adversarial settings, including under adaptive attacks~\cite{abdelnabi2025firewalls,zhan2025adaptive,nasr2025attackersecond}. This motivates separating semantic mitigation from theorem-backed guarantees at the effect boundary.

On enforcement, prior systems provide DSLs and runtime controls, including AgentSpec, AgentGuard, and FIDES~\cite{agentspec2026,agentguard2025,balunovic2025fides}. Recent work also studies temporal constraints for LLM-agent actions~\cite{kamath2025temporalconstraints}. These systems are valuable engineering baselines. Our contribution is narrower and more formal: finite-trace admissibility semantics, stated mediation and extraction assumptions, and mechanized proofs of monitor soundness, refinement monotonicity, and one-way composition soundness.

Adjacent policy traditions provide important context. In access control, RBAC and ABAC frameworks capture role- and attribute-based authorization structure~\cite{sandhu1996rbac,nist2014abac}, and Cedar develops a modern authorization language with verification-guided engineering~\cite{cutler2024cedar,disselkoen2024builtcedar}. In network security, firewall and network-policy work emphasizes explicit policy objects and analyzable composition~\cite{bartal1999firmato,khurshid2012veriflow,anderson2014netkat}. Complementary lines analyze distributed firewall anomalies and header-space invariant checking~\cite{alshaer2004distributedfirewall,kazemian2012headerspace,kazemian2013realtime}. We build on these lessons, but target adversarial agent effect traces rather than traditional user-subject authorization and packet-forwarding~policies.

The formal basis follows classic monitor-enforcement literature: enforceable policies as safety properties~\cite{schneider2000enforceable,basin2013enforceable}, edit/reference monitors~\cite{ligatti2005edit,erlingsson2000irm,anderson1972}, and compositional reasoning for interacting components~\cite{dealfaro2001interface}. We do not claim to replace this theory. We instantiate it for offensive-capable LLM agents by treating alignment as constrained effect execution, not intent inference, and by making the paper-to-mechanization correspondence explicit.

\section{Conclusion}
\label{sec:conclusion}

Offensive-capable agents create an authorization problem that is easy to state and hard to make precise: they should be powerful inside an approved engagement and constrained everywhere else. Alignment contracts address this problem at the level where enforcement can be made explicit, namely modeled effects that cross a mediation boundary. For this purpose, the framework treats alignment as constrained effect execution, not intent inference.

The formal core gives this boundary a finite-trace semantics. Contracts specify scope, allowed and forbidden descriptors, budgets, and modeled disclosures; admissibility is decidable after event extraction; and monitor-realized traces satisfy the authored contract when the Effect Observability Assumption holds (Theorem~\ref{thm:soundness}). Refinement and one-way composition support modular contract construction. The Lean 4 artifact checks the theorem-labeled claims or records the explicit assumption schemas on which they depend.

The same formalization also clarifies what the approach does not cover. Static pre-admission checks inherit undecidability under the stated reduction assumptions, mediation bypass invalidates the soundness premise, payload and steganographic channels outside the event model are out of scope, and timing channels require temporal predicates not present in the core calculus. More broadly, the framework suggests a design principle for agentic security systems: keep model reasoning outside the trusted core and enforce explicit contracts at the effect boundary. This separation makes the residual risk auditable: operators can inspect which effects are mediated, which disclosures are modeled, and which assumptions sit outside the proof. Future work should quantify retained offensive utility under strict contracts, mechanize controlled contract updates, and evaluate best-effort semantic defenses in deployments.

\bibliographystyle{ACM-Reference-Format}
\bibliography{references-short}

\appendix

\section{Mechanization Tables}
\label{sec:appendix-mechanization-tables}

Table~\ref{tab:notation} summarizes the notation used by the contract semantics in Sections~5--7.

\begin{table}[!htb]
\centering
\small
\begin{tabularx}{\columnwidth}{@{}p{1.8cm}Y@{}}
\toprule
\textbf{Symbol} & \textbf{Meaning} \\
\midrule
$\SigmaEff$ & Effect alphabet (concrete profile here: $\{\Net, \FS, \Exec\}$). \\
$\mathsf{Desc}$ & Effect-descriptor domain $\Sigma_{\sigma \in \SigmaEff}\mathsf{Params}(\sigma)$. \\
$\mathsf{Params}(\sigma)$ & Parameter space for effect class $\sigma$. \\
$\mathsf{Event}$ & Event domain $e=\langle \sigma,p,t,i \rangle$ with $p \in \mathsf{Params}(\sigma)$. \\
$\epsilon,\ \Tr,\ \Tr_{<k}, |\Tr|$ & Empty trace, trace, prefix, and trace length. \\
$\Tr' \sqsubseteq \Tr$ & $\Tr'$ is a prefix of $\Tr$. \\
$\Contract$ & Contract $\langle S, E_{\mathsf{allow}}, E_{\mathsf{forbid}}, B, D, \mathsf{Res}, \mathsf{cost}, \mathsf{flows}\rangle$. \\
$\mathsf{target}(e)$ & Target extracted from event $e$. \\
$S(\cdot)$ & Scope predicate over targets. \\
$E_{\mathsf{allow}}, E_{\mathsf{forbid}}$ & Descriptor predicates $\mathsf{Desc} \to \mathsf{Prop}$, with forbid precedence. \\
$B, \mathsf{cost}_r$ & $B:\mathsf{Resource}\to\mathbb{N}$ and $\mathsf{cost}_r:\mathsf{Event}\to\mathbb{N}$. \\
$D, \mathsf{flows}(e)$ & $D:\mathsf{DataClass}\to\mathsf{Sink}\to\mathsf{Prop}$ and $\mathsf{flows}:\mathsf{Event}\to\mathcal{P}(\mathsf{DataClass}\times\mathsf{Sink})$. \\
$\Adm(e, \Tr, \Contract)$ & Event admissibility. \\
$\Tr \models \Contract$ & Trace satisfaction. \\
\bottomrule
\end{tabularx}
\caption{Notation for the alignment contracts framework.}
\label{tab:notation}
\end{table}

Table~\ref{tab:mechanization-map} gives the canonical symbol-level mapping between paper notation and Lean constructs.

\begin{table*}[!htb]
\centering
\small
\begin{tabularx}{\textwidth}{@{}p{2.4cm}p{3.8cm}Y@{}}
\toprule
\textbf{Paper Symbol} & \textbf{Lean Construct} & \textbf{Notes} \\
\midrule
$\SigmaEff$ & \texttt{Formal.EffectClass} & finite inductive type; web instantiation uses \texttt{net}, \texttt{fs}, \texttt{exec} (mechanization also supports \texttt{chain} as an extension constructor) \\
$e=\langle \sigma,p,t,i \rangle$ & \texttt{Formal.Event} & timestamp and agent id are natural numbers \\
$(\sigma,p)$ descriptor & \texttt{Formal.EffectDesc} & alias of \texttt{EffectClass $\times$ Params}; mechanized core uses a normalized \texttt{Params} carrier; domain-specific schemas compile to this descriptor form \\
$\Tr$ & \texttt{Formal.Trace = List Event} & mechanized core is finite-trace \\
$\Tr' \sqsubseteq \Tr$ & \texttt{Formal.Prefix} & existential append characterization \\
$\Contract$ & \texttt{Formal.Contract} & includes \texttt{inScope}, \texttt{allow}, \texttt{forbid}, \texttt{budget}, \texttt{disclose}, \texttt{resources}, \texttt{cost}, \texttt{flows} \\
$\mathsf{target}(e)$ & \texttt{Formal.target} & extracted from event class and target field \\
$\Adm(e,\Tr,\Contract)$ & \texttt{Formal.Admissible} & conjunction of scope, allow, not-forbid, budget, disclosure \\
$\Tr \models \Contract$ & \texttt{Formal.SatisfiesPrefix} & inductive prefix satisfaction \\
$\Contract' \sqsubseteq \Contract$ & \texttt{Formal.Refines} & includes resource inclusion and equality of cost/flow extractors \\
$\mathsf{compose}(\Contract_1,\Contract_2)$ & \texttt{Formal.compose} & one-way soundness proven under \texttt{ComposeCompat} \\
\bottomrule
\end{tabularx}
\caption{Canonical mapping from paper-level notation to mechanized Lean 4 constructs.}
\label{tab:mechanization-map}
\end{table*}

Table~\ref{tab:theorem-coverage} gives theorem-level coverage, linking each paper claim label to its mechanized Lean theorem and assumption scope.

\begin{table*}[!htb]
\centering
\small
\begin{tabularx}{\textwidth}{@{}p{2.8cm}p{4.8cm}Y@{}}
\toprule
\textbf{Paper Label} & \textbf{Lean Theorem} & \textbf{Mechanized Status} \\
\midrule
\texttt{prop:safety} & \texttt{Formal.safety\_finiteWitness} & Directly mechanized finite bad-prefix witness statement over finite traces. \\
\texttt{lem:prefix-preserve} & \texttt{Formal.prefix\_trans} & Directly mechanized prefix transitivity over finite traces. \\
\texttt{lem:monitor-det} & \texttt{Formal.monitor\_state\_deterministic} & Directly mechanized determinism of monitor run for equal proposed traces. \\
\texttt{thm:soundness} & \texttt{Formal.enforcement\_soundness} & Directly mechanized for monitor-run semantics; deployment transfer uses EOA as an external systems assumption. \\
\texttt{thm:refinement} & \texttt{Formal.refinement\_soundness} & Directly mechanized with explicit \texttt{Refines} obligations. \\
\texttt{thm:composition} & \texttt{Formal.composition\_soundness\_left} & Mechanized one-way implication only (composed $\Rightarrow$ each component), under \texttt{ComposeCompat}. \\
\texttt{prop:decidable} & \texttt{Formal.admissibleDecidable} & Mechanized with finite-list budget/disclosure decidability helpers; complexity bounds are analytical. \\
\texttt{prop:preserving} & \texttt{Formal.preservation\_under\_rules} & Mechanized abstractly for rule sequences preserving \texttt{Safe}. \\
\texttt{lem:eoa} & \texttt{Formal.eoa\_from\_safe} & Auxiliary assumption-transport lemma (if \texttt{Safe $\rightarrow$ EOA}, then per-state implication). \\
\texttt{lem:bridge} & \texttt{Formal.safe\_trace\_bridge} & Auxiliary assumption-transport lemma for the \texttt{Safe}-to-trace-satisfaction bridge. \\
\texttt{thm:compositionality} & \texttt{Formal.adaptation\_compositionality} & Mechanized adaptation assumption schema with explicit \texttt{Safe} and bridge assumptions. \\
\path{thm:concrete-adaptation} & \path{Formal.dep_adaptation_soundness} & Mechanized concrete deployment instance with explicit list-based rule-preservation assumptions. \\
\texttt{thm:impossibility} & \path{Formal.preadmission_undecidable_of_reduction} & Mechanized supplied-reduction transfer theorem; concrete forbidden-event construction is described in the text. \\
\texttt{thm:boundary} & \texttt{Formal.enforcement\_boundary} & Mechanized schema-level equivalence parameterized by forward/backward obligations. \\
\bottomrule
\end{tabularx}
\caption{Theorem-to-mechanization correspondence with explicit assumption scope.}
\label{tab:theorem-coverage}
\end{table*}

\section{Open Science}
The artifact contains the Lean 4 mechanization of the paper's formal core.
\begin{itemize}
\item Lean sources under \path{Formal/}, with entry points \path{Formal.lean} and \path{Proofcheck.lean}.
\item Project configuration in \path{lean-toolchain}, \path{lakefile.lean}, and \path{lake-manifest.json}.
\item Theorem correspondence manifest in \path{artifacts/proof_manifest.json}, plus a successful \texttt{lake build} log.
\end{itemize}
The intended reproducibility check is simply \texttt{lake build} with the pinned Lean toolchain.

\section{Ethical Considerations}
This work studies controls for offensive-capable agentic systems, where misuse risk is highest when deployment occurs without explicit authorization boundaries. The framework therefore centers contract-level scope, allow/forbid constraints, resource budgets, and disclosure policies as enforceable objects. Consistent with the main text, guarantees are conditional on complete mediation (EOA) of $\SigmaEff$-effects and apply only to observable effect behavior; payload/steganographic channels outside the event model, unconstrained timing channels, and payload-semantic intent inference remain outside theorem-level guarantees. The paper does not claim universal prevention of abuse. It claims only what is formalized in the contract semantics and proved under the stated assumptions.

\section{Discussion}
\label{sec:discussion}

\paragraph{Scope of guarantees.}
The guarantees are conditional on the Effect Observability Assumption (EOA): complete mediation of $\SigmaEff$-effects. Payload/steganographic channels outside the event model and timing channels without explicit temporal predicates are outside scope (Theorem~\ref{thm:boundary}). Payload semantics receive best-effort mitigation, not theorem-level guarantees. Contracts also do not repair policy-authoring errors; the monitor enforces the authored contract.

\paragraph{Practical considerations.}
This paper develops the formal core; deployment still requires engineering work: (1) monitor implementations using the tractability basis in Proposition~\ref{prop:decidable}, (2) concrete mediation mechanisms to satisfy EOA (for example kernel hooks and sandboxing), and (3) tooling for contract authoring and review.

\paragraph{Deployment posture.}
The model constrains effect realization, not model intent. For offensive-capable agents, we therefore recommend explicit signed contracts, verifiable mediation, and audit trails as baseline deployment requirements.

\paragraph{Toward dynamic contract updates (future work).}
The mechanized core assumes a static contract (Definition~\ref{def:contract}) during one engagement. Real deployments may require controlled updates, such as scope extension after a verified prerequisite. One possible extension is a witness-gated update request $\langle C_{\mathit{cur}}, \Delta, W, \mathit{ctx} \rangle$, where $W$ is machine-checkable evidence (for example an operator signature or external state proof). Any dynamic update semantics should preserve the main invariants used throughout the paper: decidable admissibility, trace soundness under EOA, and compatibility with refinement and composition definitions (Definition~\ref{def:refinement}, Section~\ref{sec:soundness}). Full formalization of this stateful update model is deferred to later development.

\end{document}